\documentclass[runningheads]{llncs}

\usepackage[utf8]{inputenc}
\usepackage{graphicx}
\usepackage[colorlinks=true,bookmarksopen,bookmarksdepth=2]{hyperref}
\usepackage[OT4]{fontenc}

\usepackage{amsmath, amsfonts, amssymb, stmaryrd}
\usepackage{xcolor, cancel}
\usepackage{refcount, listings, ebproof}

\lstset{
  basicstyle=\ttfamily\color{black},
  mathescape=true,
  showstringspaces=false,
  keywordstyle={\color{green!50!black}},
  commentstyle=\color{gray}\itshape,
  stringstyle=\color{green!80!black},
  identifierstyle=\color{blue!50!black},
  language=caml
}

\newcommand \alt{\;\;|\;\;}
\newcommand \subtype{<:}


\newcommand{\ie}{i.e.}
\newcommand{\eg}{e.g.}
\newcommand{\wrt}{w.r.t.~}
\newcommand{\etal}{et~al.}


\newcommand{\call}[2]{{{#1}({#2})}}
\newcommand{\fun}[2]{{{#1} \to {#2}}}
\newcommand{\optional}[1]{{{#1}^?\!}}
\newcommand{\nil}{\bullet}
\newcommand{\cons}[2]{{#1 :: #2}}
\newcommand{\none}{\nil}
\newcommand{\some}[1]{{[#1]}}

\newcommand{\lookup}[2]{{\call {#1} {#2}}}
\newcommand{\update}[3]{{{#1}[{#2} := {#3}]}}
\newcommand{\alloc}[3]{{{#1} \ast [{#2} \mapsto {#3}]}}
\newcommand{\transl}[2]{\underline{ {#2}, {#1} }}
\newcommand{\strip}[1]{\lookup {\mathsf{strip}}{#1}}
\newcommand{\emp}{\mathbf{emp}}
\newcommand{\init}{\mathbf{init}}

\newcommand{\dubleton}{\mathit{dub}}


\newcommand{\term}{t}
\newcommand{\val}{v}
\newcommand{\wnf}{w}
\newcommand{\inert}{i}
\newcommand{\nf}{n}
\newcommand{\neu}{a}

\newcommand{\Context}{C}
\newcommand{\Genericctx}{C}
\newcommand{\Weak}{W}

\newcommand{\Fireball}{F}

\newcommand{\plug}[2]{ {{#1}[{#2}]} }
\newcommand{\contr}[1]{ {\rightharpoonup}_{#1} }
\newcommand{\red}[2]{ {\stackrel{#1}{\to}_{#2}} }
\newcommand{\rred}[2]{ {\stackrel{#1}{\twoheadrightarrow}_{#2}} }
\newcommand{\nred}[2]{ {\stackrel{#1}{\not\to}_{#2}} }
\newcommand{\converts}[2]{ {\stackrel{#1}{=}_{#2}} }

\newcommand{\tvar}[1]{{#1}}
\newcommand{\tlam}[2]{{\lambda{#1}.\,{#2}}}
\newcommand{\tapp}[2]{{{#1}\;{#2}}}

\newcommand{\avar}[1]{\ensuremath{V}(#1)}
\newcommand{\abstr}[3]{[#1, #2, #3]}
\newcommand{\cache}[2]{{{#2}^{#1}}}
\newcommand{\clos}[2]{{[ #1, #2 ]}}

\newcommand{\subst}[3]{\update{#3}{#1}{#2}}
\newcommand{\abstrb}[2]{[#1, #2]}
\newcommand{\fresh}[1]{{{#1}^\ast}}
\newcommand{\fv}[1]{ {\mathit{FV}({#1})} }
\newcommand{\bv}[1]{ {\mathit{BV}({#1})} }
\renewcommand{\emptyset}{\varnothing}
\newcommand{\disjoint}[2]{ {#1 \cap #2 = \emptyset} }
\newcommand{\delimit}[1]{{\langle {#1} \rangle}}


\newcommand{\hole}{\Box}
\newcommand{\frapp}[1]{{\tapp{\hole}{#1}}}
\newcommand{\flclos}[2]{\tapp{\clos{#1}{#2}}{\hole}}
\newcommand{\flapp}[1]{{\tapp{#1}{\hole}}}
\newcommand{\flam}[1]{{\tlam {#1} \hole}}
\newcommand{\fcache}[1]{@[{#1}]}


\newcommand{\econf}[5]{{\langle{#3}, {#4}, {#5}, {#1}, {#2}\rangle_{\cal E}}}
\newcommand{\cconf}[4]{{\langle{#3}, {#4}, {#1}, {#2}\rangle_{\cal C}}}
\newcommand{\sconf}[4]{{\langle{#3}, {#4}, {#1}, {#2}\rangle_{\cal{S}}}}
\newcommand{\mconf}[6]{{\langle{#3}, {#4}, {#5}, {#6}, {#1}, {#2}\rangle_{\cal M}}}

\newcommand{\econfb}[2]{{\langle{#1}, {#2}\rangle_{\cal E}}}
\newcommand{\cconfb}[2]{{\langle{#1}, {#2}\rangle_{\cal C}}}
\newcommand{\sconfb}[2]{{\langle{#1}, {#2}\rangle_{\cal{S}}}}
\newcommand{\mconfb}[4]{{\langle{#1}, {#2}, {#3}, {#4}\rangle_{\cal M}}}


\newcommand{\psit}{{ {\Phi_\textsf{t}} }}
\newcommand{\psiw}{{ {\Phi_\textsf{v}}}}
\newcommand{\psis}{{ {\Phi_\textsf{S}}}}
\newcommand{\psih}{{ {\Phi_\textsf{H}}}}
\newcommand{\psik}{{ {\Phi_\textsf{K}}}}
\newcommand{\Psit}[1]{{\call {\psit} {#1}}}
\newcommand{\Psiw}[1]{{\call {\psiw} {#1}}}
\newcommand{\Psis}[1]{{\call {\psis} {#1}}}
\newcommand{\Psih}[1]{{\call {\psih} {#1}}}
\newcommand{\Psik}[1]{{\call {\psik} {#1}}}


\newcommand{\semsub}[2]{\llbracket {#1} \rrbracket_{#2}}
\newcommand{\semt}[1]{{\semsub {#1} {\mathsf{t}}}}
\newcommand{\semv}[1]{{\semsub {#1} {\mathsf{v}}}}
\newcommand{\sems}[1]{{\semsub {#1} {\mathsf{S}}}}
\newcommand{\semk}[1]{{\semsub {#1} {\mathsf{K}}}}

\begin{document}
\title{Strong Call by Value is Reasonable for Time\thanks
{This research is supported by the National Science
    Centre of Poland, under grant number 2019/33/B/ST6/00289.}}

\renewcommand{\orcidID}[1]{\href{https://orcid.org/#1}{\includegraphics[width=8pt]{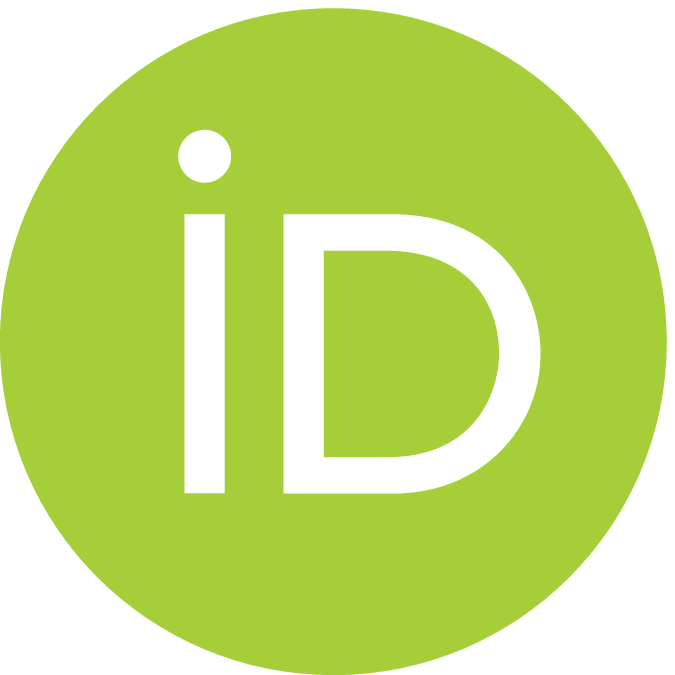}}}

\author{Małgorzata Biernacka\inst{1}\orcidID{0000-0001-8094-0980} \and
Witold Charatonik\inst{1}\orcidID{0000-0001-7062-0385} \and
Tomasz Drab\inst{1}\orcidID{0000-0002-6629-5839}}
\authorrunning{M. Biernacka \etal}

\institute{Institute of Computer Science, University of Wrocław, Poland\\
\email{\{mabi,wch,tdr\}@cs.uni.wroc.pl}\\
{https://ii.uni.wroc.pl/\homedir\{mabi,wch,tdr\}}}

\maketitle

\begin{abstract}
  The invariance thesis of Slot and van Emde Boas states that all
  \emph{reasonable} models of computation simulate each other with
  polynomially bounded overhead in time and constant-factor overhead
  in space. In this paper we show that a family of strong
  call-by-value strategies in the $\lambda$-calculus are reasonable
  for time. The proof is based on a construction of an appropriate
  abstract machine, systematically derived using Danvy \etal's
  functional correspondence that connects higher-order interpreters
  with abstract-machine models by a well-established transformation
  technique. This is the first machine that implements a strong CbV
  strategy and simulates $\beta$-reduction with the overhead
  polynomial in the number of $\beta$-steps and in the size of the
  initial term. We prove this property using a form of amortized cost
  analysis {\`a} la Okasaki.

    \keywords{$\lambda$-calculus \and Abstract machines
    \and Computational complexity \and Reduction strategies
    \and Normalization by evaluation.}
\end{abstract}


\section{Introduction}

\label{sec:intro}
The invariance thesis of Slot and van Emde
Boas~\cite{DBLP:conf/stoc/SlotB84} states that all \emph{reasonable}
models of computation simulate each other with polynomially bounded
overhead in time and constant-factor overhead in space. For a long
time it was not known whether there exist variants of the
$\lambda$-calculus reasonable in this sense. In particular, it was not
known whether there are evaluation strategies for the $\lambda$-calculus
that can be simulated on Turing machines in time bounded by a
polynomial of the number of performed $\beta$-reductions. Recently
this question has been answered positively, for both time and space,
with the weak call-by-value (CbV)~\cite{DBLP:journals/tcs/LagoM08} and,
for time, with the strong call-by-name
(CbN)~\cite{Accattoli-DalLago:LMCS16} strategies. Here, by
constructing an appropriate abstract machine, we show that a family of
strong CbV strategies are reasonable for time.

It is well known that the $\lambda$-calculus provides a foundation for
functional programming languages such as OCaml or Haskell, and more
recently---for proof assistants such as Coq or ELF.  A typical
functional programming language uses a weak reduction strategy (\eg,
call by value or call by need) that only reduces terms until a weak
value is reached, and in particular it does not descend under lambda
abstractions. On the other hand, proof assistants---which can be seen
as functional languages with a rich system of dependent
types---require a strong (\ie, full-reducing) reduction strategy for
type checking.  The study of strong normalization strategies in the
$\lambda$-calculus, their properties and efficient implementation
models, is therefore directly motivated by the need for efficient
tools to handle large-scale, complex verification tasks carried out in
such proof
assistants~\cite{Gregoire-Leroy:ICFP02,Balabonski-al:ICFP17}.

Abstract machines provide implementation models for $\beta$-reduction
that operationalize its key aspects: the process of finding a redex
(decomposition) and the meta-operation of substitution
($\beta$-contraction). They are low-level devices, pretty close to
Turing machines. An abstract machine is typically deterministic and it
makes specific choices when searching for redices, \ie, it implements
a specific reduction strategy of the calculus. Accattoli et al. in his
work on the complexity of abstract
machines~\cite{Accattoli-Coen:LICS15,Accattoli-Barras:PPDP17,DBLP:journals/scp/AccattoliG19}
advocate the following notion of reasonability: a machine is called
\emph{reasonable} if it simulates the strategy with the time overhead
bounded by a polynomial in the number of $\beta$-steps and in the size
of the initial term.

The main technical goal of our work is to construct a reasonable (in
the formal sense defined above) abstract machine for strong CbV. This
has been a nontrivial
\cite{DBLP:conf/rta/Accattoli19,DBLP:conf/ppdp/AccattoliCGC19} open
problem
(partial results include
\cite{AccattoliG16,DBLP:journals/scp/AccattoliG19,BiernackaBCD20}) and
an important step in the research program aimed at developing the
complexity analysis of the Coq main abstract machine formulated by
Accattoli~\cite{Accattoli-Barras:PPDP17}. Since there are known
simulations of Turing machines in the lambda calculus and of abstract
machines by Turing machines, the existence of such a machine implies
reasonability (for time) of strong CbV in the sense of Slot
and van Emde Boas.

\paragraph{Related work.}
In the case of weak reduction there is a vast body of work on abstract
machines, their efficiency and construction techniques, but in the
strong case the territory is largely uncharted. The first machine for
strong normalization in the $\lambda$-calculus is due to Crégut and it
implements Strong CbN~\cite{Cregut:HOSC07}.  A normalization function
realizing strong CbV was proposed by Gr\'egoire\&Leroy and implemented
in their virtual machine extending the ZAM
machine~\cite{Gregoire-Leroy:ICFP02}. Another virtual machine for
strong CbV was derived by Ager et al.~\cite{Ager-al:RS-03-14} from
Aehlig and Joachimski's normalization
function~\cite{Aehlig-Joachimski:MSCS04}. Recently, a strong
call-by-need strategy has been proposed by Kesner et
al.~\cite{Balabonski-al:ICFP17}, and the corresponding abstract
machine has been derived by Biernacka et al.~\cite{BiernackaC19}.
Finally, in a recent work~\cite{BiernackaBCD20}, Biernacka et
al. introduced the first abstract machine for strong CbV, and more
specifically for the right-to-left strong CbV strategy. That machine
has been derived by a series of systematic transformation steps from a
standard normalization-by-evaluation function for the CbV lambda
calculus. All three interpreters of strong
CbV~\cite{Ager-al:RS-03-14,BiernackaBCD20,Gregoire-Leroy:ICFP02} are
not reasonable in the sense defined above.

Our work builds on previous developments in the derivational approach
to the construction and study of semantic artefacts, and in particular
we use Danvy et al's functional correspondence and standard techniques
used in functional programming~\cite{Ager-al:PPDP03}. The
outcome of the derivation is an abstract machine whose control stacks
arise as defunctionalized continuations of the CPS-transformed
evaluator. In the case of strong CbV (as well as other hybrid
strategies) these stacks are not uniform; their structure
corresponds directly to multiple-kinded reduction contexts and this
correspondence can be described by a suitable shape invariant of the
stacks. Hybrid strategies and their connection with machines
have been studied by Garcia-Perez \& Nogueira, and by Biernacka et
al.~\cite{Garcia-Nogueira:SCP14,BChZ-fscd17}.

In between the weak and the strong CbV strategies, one can consider
the Open CbV strategy that extends the usual weak CbV strategy in that
it works on open terms and generalizes the notion of value
accordingly. In a series of articles, Accattoli et al. study Open CbV
in the form of so-called {\em fireball calculus}. They also show
several abstract machines (the GLAM family) and study their complexity
properties with the aim of providing reasonable and efficient
implementations of Open CbV~\cite{DBLP:journals/scp/AccattoliG19}.

\paragraph{Contributions.}
In order to construct a reasonable abstract machine for strong CbV we refine
the techniques from~\cite{BiernackaBCD20}.  Our starting point is the
same NbE evaluator as in~\cite{BiernackaBCD20} but we modify it using
the standard memoization technique to store weak values together with
their strong normal forms (if they get computed along the way).  Next,
we apply Danvy et al's functional correspondence~\cite{Ager-al:PPDP03}
to transform this new interpreter into a first-order abstract
machine. This way the obtained machine is a systematically constructed
semantic artefact rather than an ad-hoc modification of an existing
one. The commented code of the development is available
in~\cite{supplement}.

In order to argue about the efficiency of the derived machine,
we apply a form of amortized cost analysis that uses a potential
function akin to Okasaki's approach~\cite{Okasaki:99}.

Thus, the contributions of this paper include:
\begin{enumerate}
\item a derivation of a reasonable abstract machine for strong CbV,
\item two variants of a reasonable machine: an environment-based one
  and a substitution-based one that uses {\em delimited} substitution,
\item a proof of correctness of the machine and of its reasonability,
\item a corollary that strong CbV is reasonable for time.
\end{enumerate}

\paragraph{Outline.}
In Section~\ref{sec:pre} we introduce the basic concepts of the lambda
calculus and the strong CbV strategy. In Section~\ref{sec:machine} we
present the derivation of the machine starting from a NbE evaluator of
the CbV lambda calculus. In Section~\ref{ssec:eam} we present the
resulting abstract machine with environments and in
Section~\ref{ssec:sam} its substitution-based variant. In
Section~\ref{sec:soundness} we prove the soundness of the machine and
in Section~\ref{sec:complexity} its
reasonability. Section~\ref{sec:conclusion} concludes.

\section{Preliminaries}
\label{sec:pre}

\subsection{\texorpdfstring{Basics of $\lambda$-calculus}{Basics of lambda calculus}}

We work with pure lambda terms given by the following grammar:
\begin{alignat*}{1}
  \term &::= x \alt \tapp {\term_1} {\term_2} \alt \tlam x \term
\end{alignat*}

\noindent where $x$ ranges over some set of identifiers.
As usual, in the sequel the use of a nonterminal restricts the range
of this meta-variable (and its versions with primes or subscripts) to
the defined family.

We define sets of \textit{free} and \textit{bound variables} in a term, and
the \textit{substitution} function as follows
:
\begin{alignat*}{7}
\fv{\tvar x} &= \{ x \} &&&
\bv{\tvar x} &= \emptyset\\
\fv{\tapp {\term_1} {\term_2}} &= \fv{\term_1} &&\cup \fv{\term_2} &
\hspace{12mm}
\bv{\tapp {\term_1} {\term_2}} &= \bv{\term_1} &&\cup \bv{\term_2}\\
\fv{\tlam x {\term}} &= \fv{\term} &&\setminus \{ x \} &
\bv{\tlam x {\term}} &= \bv{\term} &&\cup \{ x \}
\end{alignat*}
\begin{alignat*}{1}
\subst {x} {\term} {\tvar {x'}} &= \begin{cases} \term &: x = x' \\ {\tvar {x'}} &: x \neq x'\end{cases}\\
\subst x {\term} {(\tapp {\term_1} {\term_2}) } &= \tapp {\subst x {\term} {\term_1}} {\subst x {\term} {\term_2}}\\
\subst {x} {\term} {(\tlam {x'} {\term'})} &= \begin{cases} \tlam {x'} {\term'} &: x = x' \\ \tlam {x'} {\subst {x} {\term} {\term'}} &: x \neq x'\end{cases}
\end{alignat*}

This ``raw'' form of substitution makes it possible to capture free
variable occurrences of a substituted term when it is substituted
under lambda (the last case). In order to avoid this problem, it is standard to
introduce $\alpha$-conversion to make sure that bound and free
variables are distinct and the substitution is capture-avoiding.

We first define \text{$\alpha$-contraction} sufficient to rename bound
variables:

\begin{center}\begin{prooftree}
  \hypo{x' \notin \fv \term \cup \bv \term}
\infer1{\tlam {x} \term \; \contr \alpha \; \tlam {x'} {\subst {x} {x'} \term} }
\end{prooftree}\end{center}

In general, a contraction relation is local and can be lifted to a
reduction relation defined on any term by {\em contextual closure} in the
following way.  A \textit{context} is a term with exactly one free
occurrence of a special variable $\hole$ called $\textit{hole}$.
Assuming that $\hole$ is not used as a bound variable, contexts in the
lambda calculus can be defined by the following grammar:
$$\Context ::=
  \tapp \term \Context \alt
  \tapp \Context \term \alt
  \tlam x \Context \alt
  \hole$$

Now, for any contraction relation $\contr{}$ and any context
$\Genericctx$ we define \textit{contextual closure} as follows:
\begin{center}\begin{prooftree}
\hypo{\term_1 \;\contr{} \; \term_2}
\infer1{\plug \Genericctx {\term_1} \;\red{\Genericctx}{}\; \plug \Genericctx {\term_2}}
\end{prooftree}\end{center}

The notation $\plug \Genericctx \term$ is a shortcut for $\subst \hole
\term \Genericctx$ and it denotes a term obtained by {\em plugging}
the hole of the context with the given term.
If the symbol above the arrow is omitted then
the contraction can be done at any location in a
term.

The reflexive-transitive closure of $\red {} {}$ is denoted by $\rred
{}{}$ and the reflexive-symmetric-transitive closure is denoted by
$\converts {}{}$, and is called \textit{conversion}.  Juxtaposition of
two relations denotes their composition, \eg, $s \, \rred {} \beta
\converts {} \alpha\, t$ means that $\exists t'.\; s\,\rred {} \beta
\,t'\, \converts {} \alpha \,t$.

\subsubsection*{Reduction semantics in the lambda calculus.}
Based on the ingredients introduced so far, we can define operational
semantics for the lambda calculus in the form of {\em reduction
  semantics} that specifies $\beta$-contraction as the atomic
computation step, and a {\em reduction strategy} that prescribes
locations in a term where $\beta$-reduction can take place.

We define $\beta$-contraction in the standard way
(and its contextual closure determines $\beta$-reduction):

\begin{center}\begin{prooftree}
  \hypo{\term_1 \; \converts {} \alpha \; \term'_1}
  \hypo{\disjoint {\bv {\term'_1}} {\fv {\term_2}}}
\infer2{\tapp {(\tlam x {\term_1})} {\term_2}
\; \contr \beta \;
\subst x {\term_2} {\term'_1}}
\end{prooftree}\end{center}

In order to define a specific strategy we need to restrict general
contexts $\Genericctx{}{}$. For the CbV strategy
we also need to restrict $\beta$-contraction.

\subsection{Call-by-value strategies}
\label{sec:cbv}
Weak reduction,
\ie, reduction that does not `go under lambda', can be defined as
reduction in weak contexts $\Weak$ defined by the following grammar:
$$\Weak ::= \tapp \term \Weak \alt \tapp \Weak \term \alt \hole$$

In a call-by-value strategy, function arguments need to be evaluated
before the function is applied. If only closed terms are considered,
function arguments are reduced to lambda abstractions,
but in the
open case
we need a more general notion of a~\textit{weak normal form}, which is
a normal form of $\red W \beta$.
Weak normal forms $\wnf$ can be expressed by the following grammar,
where the auxiliary category $\inert$ denotes \textit{inert terms}:
\begin{alignat*}{3}
\wnf &::= & \; \tlam x t &\alt \inert\\
\inert &::= & \tapp \inert \wnf &\alt x
\end{alignat*}
To prevent the substitution of reducible terms we restrict
$\beta$-contraction by requiring that the argument is a weak normal form:
\begin{center}\begin{prooftree}
  \hypo{\term \; \converts {} \alpha \; \term'}
  \hypo{\disjoint {\bv {\term'}} {\fv \wnf}}
\infer2{
\tapp {(\tlam x {\term})} {\wnf}
\; \contr {\beta_\wnf} \;
\subst x {\wnf} {\term'}}
\end{prooftree}\end{center}

This way we obtain the relation $\red \Weak {\beta_\wnf}$, which is
exactly the reduction of the \textit{fireball calculus}, where weak normal
forms are called fireballs~\cite{AccattoliG16}. This calculus is
nondeterministic but strongly confluent and hence all derivations to
weak normal forms have the same length.  It also restores the property
(called {\em harmony} in ~\cite{AccattoliG16}) that its normal
forms are substitutable, \ie, there are no stuck terms.

The fireball reduction can be made deterministic by narrowing the
space of possible evaluation contexts by only allowing to search for
redices in the left part of an application when its right-hand side is a
weak value:
$$F ::= \tapp \term F \alt \tapp F \wnf \alt \hole\\$$

The relation $\red \Fireball {\beta_\wnf}$ (which is
$\beta_\wnf$-contraction in contexts generated by $F$-contexts) is a
restriction of $\red \Weak {\beta_\wnf}$ to a right-to-left strategy,
and it is a~deterministic extension of the closed weak, right-to-left
CbV strategy to the open case.

To obtain a conservative extension of a weak CbV strategy to a
full-reducing one, we can just iterate reduction under lambdas after
reaching a weak normal form.  Such iterations of $\red \Weak
{\beta_\wnf}$ are also strongly confluent and therefore any such
strategy is a good representative of the strong CbV strategy.\footnote{The
applicative order, \ie, leftmost-innermost reduction, where arguments
of a function are directly reduced to a strong normal form, is not a
conservative extension of weak CbV and therefore we do not consider it
as a strong CbV strategy.  Moreover, it does not support recursion
\cite{Sestoft:Jones02} and may take a different number of steps than
strong CbV strategies as defined here.}

In this paper, we study the twice right-to-left\footnote{The
right-to-left direction of evaluation in applications is traditional
and used, \eg, in OCaml.} call-by-value strategy (in short, rrCbV),
that is an example of a deterministic extension of $\red \Fireball
{\beta_\wnf}$ to a fully reducing strategy: it is
deterministic~\cite{BiernackaBCD20}, uses the same contraction
relation, and F-contexts are a subset of the rrCbV context family.
This strategy is denoted $\red R {\beta_\wnf}$ and its grammar of
contexts can be defined using three context nonterminals $H$, $R$ and
$F$ (the latter defined as above for the weak strategy):
\begin{alignat*}{3}
R &::= & \;\tlam x R &\alt H \alt F\\
H &::= & \tapp \inert R &\alt \tapp H \nf
\end{alignat*}
where $\nf$ are normal forms of the strategy $\red {} {\beta}$, and
$\neu$ are neutral terms, both defined as follows:
\begin{alignat*}{3}\nf &::= & \; \tlam x \nf &\alt \neu\\
  \neu &::= & \tapp \neu \nf &\alt x
\end{alignat*}
rrCbV is a hybrid strategy, \ie, it combines different substrategies:
the $F$-contexts define the weak CbV substrategy, $R$ is the starting
nonterminal defining the substrategy that either weakly reduces, or
fully reduces weak normal forms using the auxiliary $H$-strategy,
which in turn is responsible for fully reducing inert terms.


\section{Construction of a Reasonable Machine}
\label{sec:machine}

\subsection{A Higher-Order Evaluator}
Our starting point is the higher-order evaluator
from~\cite{BiernackaBCD20} presented in Figure~\ref{fig:nbe-cbv}.  It
computes $\beta$-normal forms by following the principles of
normalization by evaluation~\cite{Filinski-Rohde:RAIRO05}, where the
idea is to map a $\lambda$-term to an object in the meta-language
(here OCaml) from which a syntactic normal form of the input term can
subsequently be read off.

\begin{figure}[t!!]
  \centering
  \begin{lstlisting}
(* syntax of the lambda-calculus with de Bruijn indices *)
type index = int
type term  = Var of index | Lam of term | App of term * term

(* semantic domain *)
type level = int
type sem = Abs of (sem -> sem) | Neutral of (level -> term)

(* reification of semantic objects into normal forms *)
let rec reify (d : sem) (m : level) : term =
  match d with
  | Abs f ->
    Lam (reify (f (Neutral (fun m' -> Var (m'-m-1))))(m+1))
  | Neutral l ->
    l m

(* sem -> sem as a retract of sem *)
let to_sem (f : sem -> sem) : sem = Abs f

let from_sem (d : sem) : sem -> sem =
  fun d' ->
    match d with
    | Abs f ->
      f d'
    | Neutral l ->
      Neutral (fun m -> let n = reify d' m in App (l m, n))

(* interpretation function *)
let rec eval (t : term) (e : sem list) : sem =
  match t with
  | Var n -> List.nth e n
  | Lam t' -> to_sem (fun d -> eval t' (d :: e))
  | App (t1, t2) -> let d2 = eval t2 e
                    in from_sem (eval t1 e) d2

(* NbE: interpretation followed by reification *)
let nbe (t : term) : term = reify (eval t []) 0
  \end{lstlisting}
  \caption{An OCaml implementation of the  higher-order
    compositional evaluator from~\cite{BiernackaBCD20}: an instance of normalization by
    evaluation for a call-by-value $\beta$-reduction in the
    $\lambda$-calculus.}
  \label{fig:nbe-cbv}
\end{figure}

The evaluator works with lambda terms in de Bruijn notation. In this
notation, lambda terms are generated by the grammar
$\term ::= n \alt \tapp {\term_1} {\term_2} \alt \lambda \term$ where
$n$ ranges over natural numbers. There are two flavours of the
notation: de Bruijn indices, where $n$ denotes the number of lambdas
between the represented variable and its binder; and de Bruijn levels,
where $n$ denotes the number of lambdas between the root of a term and
the binder of the variable. For example,
$\tlam{x} {\tapp{x}{(\tlam{y}{\tapp{y}{x}} )}}$ is represented as
$\lambda\, \tapp {0}{(\lambda\, \tapp {0}{1})}$ with de Bruijn indices
and as $\lambda\, \tapp {0}{(\lambda\, \tapp {1}{0})}$ with
levels. Using de Bruijn notation has a clear benefit of avoiding
problems with $\alpha$-conversion, but there is some cost of having
less intuitive $\beta$-reduction.

The normalization function first evaluates terms into the semantic
domain represented by the recursive type {\tt sem} -- it is completely
standard and implemented by the function {\tt eval}. Then the normal
form is extracted from the semantic object by the function {\tt reify}
that mediates between syntax and semantics in the way known from
Filinski and Rohde's work~\cite{Filinski-Rohde:RAIRO05} on NbE for the
untyped $\lambda$-calculus.

Using Danvy et al.'s functional correspondence~\cite{Ager-al:PPDP03}
between higher-order evaluators and abstract machines, the evaluator
of Figure~\ref{fig:nbe-cbv} is transformed in~\cite{BiernackaBCD20} to
an abstract machine that implements a full-reducing call-by-value
strategy for pure $\lambda$-calculus. The obtained machine is not
reasonable in terms of
complexity~\cite{DBLP:journals/scp/AccattoliG19}: it cannot simulate
$n$ steps of $\beta$-reduction in a number of transitions that is
polynomial in $n$ and in the size of the initial term. The reason is
that it never reuses constructed structures, so it has to introduce
each constructor of the resulting normal form in a separate step. This
can lead to an exponential blow-up, as the following example
shows. Consider a family of terms
$$ \omega := \tlam{x}{\tapp{x}{x}} \hspace{3cm}
e_n := \tlam{x}{\tapp{\tapp{c_n}{\omega}}{x}}$$ where $c_n$ denotes the
$n^\text{th}$ Church numeral.  Each $e_n$ reduces to its normal form in the
number of steps linear in $n$, but the size of this normal form is
exponential in $n$.

\begin{remark}
The interpreters of~\cite{Ager-al:RS-03-14,BiernackaBCD20,Gregoire-Leroy:ICFP02} all suffer from exponential time overhead because of the size explosion problem.
\end{remark}

\subsection{A Pitfall of de Bruijn Indices and Levels}

Consider term families defined as follows (here $x, z$ are free variables):

\begin{alignat*}{7}
& A_0    && := x &\hspace{2cm}
& B_0    && := z \\
& A_{n+} && := \tapp {\tapp {A_n} {(\tlam y {\tvar y})}} x &
& B_{n+} && := \tapp z {\tlam w {B_n}}
\end{alignat*}
$$ Q_n := \tlam x {\tapp {(\tlam z {B_n})} {A_n}}$$
Note that there are $n+1$ free occurrences of variable $z$ in $B_n$,
each of which is under a different number of lambdas (on a different de
Bruijn level).  Terms of family $Q$ are closed and they reach their
normal forms in one \text{$\beta$-reduction} which substitutes $A_n$
for $z$ in $B_n$. The size of $A_n$ is linear in $n$ and it is
substituted for linearly many $z$s in $B_n$ resulting in a normal form
of quadratic size.

Terms $A_n$ can be easily shared in memory and the resulting
representation has size linear in $n$.  It is not however possible
with de Bruijn indices nor levels representation because the resulting
normal form has quadratically many constructors of distinct subterms.
We illustrate this issue with the term $Q_2$ and its normal forms
using 3 different representations: with names, indices and levels.

$$Q_2 = \tlam x {\tapp {(\tlam z {\tapp z {\tlam w {\tapp z {\tlam w z}}}})} {(\tapp {\tapp {\tapp {\tapp x {(\tlam y {\tvar y})}} x} {(\tlam y {\tvar y})}} x})}
$$
\begin{alignat*}{12}
\mathcal N_{\text{nam}}(Q_2) & =
\lambda x.&&(x (\tlam y y)&&x(\tlam y y)&&x)&&
\lambda w.&&(x (\tlam y y)&&x(\tlam y y)&&x)&&
\lambda w.&&(x (\tlam y y)&&x(\tlam y y)&&x)\\
\mathcal N_{\text{ind}}(Q_2) & =
\lambda &&(0\;\,(\lambda 0)&&0\;\,(\lambda 0)&&0)&&
\lambda &&(1\;\,(\lambda 0)&&1\;\,(\lambda 0)&&1)&&
\lambda &&(2\;\,(\lambda 0)&&2\;\,(\lambda 0)&&2)\\
\mathcal N_{\text{lev}}(Q_2) & =
\lambda &&(0\;\,(\lambda 1)&&0\;\,(\lambda 1)&&0)&&
\lambda &&(0\;\,(\lambda 2)&&0\;\,(\lambda 2)&&0)&&
\lambda &&(0\;\,(\lambda 3)&&0\;\,(\lambda 3)&&0)
\end{alignat*}
Here we have three occurrences of the term
$A_2=(x (\tlam y y) x (\tlam y y) x)$, each on a different de Bruijn
level. Therefore, in the index notation, the subterm $x$ has a
different representation in each of these occurrences. Similarly, in
the level notation, the subterm $\tlam y y$ has a different
representation in each of these occurrences. In consequence, none of
the two notations allows sharing of different occurrences of
$A_2$. This example shows that when working with de Bruijn
representations it is not possible to bound the amount of work per
single \text{$\beta$-reduction} by a quantity proportional to the size
of the initial term.  It does not mean that machines working with these
representations must be unreasonable but a proof involving this
quadratic dependency may be more complex than one without this
problem.  Therefore, we choose to work with names.

\subsection{A Reasonable Higher-Order Evaluator}
\label{sec:reasonableEvaluator}
Now we present a higher-order evaluator that will be transformed
into an abstract machine.  The implementation is given in
OCaml~\cite{Leroy-al:OCaml-4.10}.  It is a modified version of the
evaluator from Figure~\ref{fig:nbe-cbv}, with three major
changes. First, it uses names instead of de Bruijn indices to
represent variables in lambda terms. Second, it abstracts the
environments (the second argument of the function \texttt{eval}) in
the sense that they are no longer directly implemented as lists, but
as different data structures implementing dictionaries. Third, it uses
caches as a form of sharing.

\subsubsection{Terms.}
We start with the syntax of $\lambda$-terms with names:
\begin{lstlisting}
type identifier = string
type term = Var of identifier
          | App of term * term
          | Lam of identifier * term
\end{lstlisting}
\subsubsection{Environments.}
Environments are dictionaries storing values assigned to identifiers.
To handle open terms an environment returns abstract variables for undefined identifiers (with
the same name) and we make sure 
they will not be captured
during abstraction
reification.  Here we extend the name in order to mark that the
variable is free and then free variables of the initial term are replaced
with variables with the extended name in the resulting term.
\begin{lstlisting}
module Dict = Map.Make(
  struct type t = identifier let compare = compare end)

type env = sem Dict.t

let rec env_lookup (x : identifier) (e : env) : sem =
  match Dict.find_opt x e with
  | Some v -> v
  | None   -> abstract_variable (x ^ "_free")
\end{lstlisting}
\subsubsection{Caches.} To achieve a reasonable implementation
for strong CbV we need to introduce a form of
sharing in order to avoid the size explosion problem.

We employ a mechanism similar to memothunks. This allows us to reuse
already computed subterms in normal forms.  An $\alpha$-cache is a
place where a result of type $\alpha$ can be stored and later used to
prevent invoking the same delayed computation many times. It is
implemented as follows:
\begin{lstlisting}
type 'a cache = 'a option ref

let cached_call (c : 'a cache) (t : unit -> 'a) : 'a =
  match !c with
  | Some y -> y
  | None   -> let y = t () in c := Some y; y
\end{lstlisting}
\subsubsection{Values.}
In the original strong CbV evaluator there are two kinds of
values: abstractions and delayed neutral terms which after
defunctionalization correspond to weak normal forms and inert terms,
respectively.  Here we add another constructor to allow annotation
of values with caches for their normal forms.
\begin{lstlisting}
type sem = Abs of (sem -> sem)
  | Neutral of (unit -> term)
  | Cache of term cache * sem
\end{lstlisting}
\subsubsection{Reification.}
In normalization by evaluation reification plays a role of a read back
of concrete syntactic objects from abstract semantic values.  It 
corresponds to full normalization of weak normal forms in the obtained
machine.

Here reification uses two auxiliary functions.  Fresh names are
generated in a standard way using one extra memory cell. In
the representation with names, abstract variables are delayed neutral terms
which just return a variable for a given identifier.
\begin{lstlisting}
let gensym : unit -> int =
  let c = ref 0 in
  fun () ->
    let res = !c in
    c := res + 1;
    res

let abstract_variable (x : identifier) : value =
  let vx = Var x in
  Neutral (fun () -> vx)
\end{lstlisting}
The reification of abstractions and neutral terms is accomplished just as
in the original evaluator: abstractions are called with an abstract
variable with a freshly generated name, and the result is reified under
lambda with the same name; delayed neutral terms are simply forced.
In the case of values with cache lookup, \texttt{Cache (c,v)}, if
the result of reification of \texttt{v} is known, it is simply read from
the cache \texttt{c}; otherwise it is computed and stored in the cache.

\begin{lstlisting}
let rec reify : sem -> term =
  function
  | Abs f ->
    let xm = "x_" ^ string_of_int (gensym ()) in
    Lam (xm, reify (f @@ abstract_variable xm))
  | Neutral l ->
    l ()
  | Cache (c, v) -> cached_call c (fun () -> reify v)
\end{lstlisting}

\subsubsection{Value application.}
Values as elements of a $\lambda$-calculus model should also play role
of endofunctions on themselves.  Abstractions are wrappings of such
functions so it is enough to unwrap them.  A neutral term applied to a
normal form creates a new neutral term, and it can be constructed by
delaying the forcing of this neutral term and the reification of the
argument.  However, if a neutral term is annotated with cache, the
cache should be consulted when the delayed computation is forced.  In
contrast, application does not normalize abstractions (abstractions
are changed by reduction) and therefore their caches are ignored in
such a situation.
\begin{lstlisting}
let rec from_sem : sem -> (sem -> sem) =
  function
  | Abs f                -> f
  |           Neutral l  -> apply_neutral l
  | Cache (c, Neutral l) -> apply_neutral
                            (fun () -> cached_call c l)
  | Cache (c,         v) -> from_sem v
and apply_neutral (l : unit -> term) (v : sem) : sem =
  Neutral (fun () -> let n = reify v in App (l (), n))
\end{lstlisting}
\subsubsection{Evaluation.} Evaluation uses an auxiliary function that
annotates a value with the empty cache, provided it is not already
annotated.  Evaluation reads a $\lambda$-expression as source code.
Source variables merely indicate values in environment.  Source
abstractions are translated to abstraction values that evaluate their
bodies with environments extended by an argument annotated with cache.
Applications are evaluated right-to-left and the left value is applied to
the right one as described earlier.
\begin{lstlisting}
let mount_cache (v:sem) : sem =
  match v with
  | Cache(_,_) -> v
  | _          -> Cache(ref None, v)

let rec eval (t : term) (e : env) : sem =
  match t with
  | Var x        -> env_lookup x e
  | Lam (x, t')  -> to_sem
        (fun v -> eval t' @@ Dict.add x (mount_cache v) e)
  | App (t1, t2) -> let v2 = eval t2 e
                    in from_sem (eval t1 e) v2
\end{lstlisting}
The normalization-by-evaluation function is the composition of
evaluation in the empty environment and reification.
\begin{lstlisting}
let nbe (t : term) : term = reify (eval t Dict.empty)
\end{lstlisting}


\subsection{Abstract Machine}
\label{ssec:eam}
Using Danvy et al.'s functional correspondence~\cite{Ager-al:PPDP03},
the evaluator constructed in Section~\ref{sec:reasonableEvaluator} is
transformed to an abstract machine. The most important steps in this
transformation are the same as on the path from the evaluator in
Figure~\ref{fig:nbe-cbv} to its corresponding abstract machine
in~\cite{BiernackaBCD20}: closure conversion, transformation to
continuation passing style, defunctionalization of continuations to
stacks, entanglement of defunctionalized form to an abstract
machine.
All
these transformations are described in the supplementary
materials~\cite{supplement}.  The machine obtained by derivation is
presented in Figures~\ref{fig:syntax} and~\ref{fig:transitions}.
\begin{figure}[tb]
\begin{alignat*}{3}
\mathit{Identifiers} \ni && x\\
\mathit{Terms}  \ni && \term   &::= \tvar{x} \alt \tapp{\term_1}{\term_2} \alt \tlam x \term\\
\mathit{Locations} \ni && \ell\\
\mathit{Values} \ni && \val   &::= \avar{x} \alt \tapp{\val_1}{\val_2} \alt \abstr x \term E \alt \cache \ell \val\\
\mathit{Envs}   \ni && E   & \;\subtype \fun {\mathit{Identifiers}} {\mathit{Values}} \\
\mathit{Frames} \ni && F   &::= \flclos \term E
    \alt \frapp \val
    \alt \flapp \val
    \alt \frapp \term
    \alt \flam x
    \alt \fcache \ell \\
\mathit{Stacks} \ni && S   &::= \nil \alt \cons{F}{S} \\
\mathit{Term\;Optionals} \ni && \optional \term & ::= \none \alt \some \term\\
\mathit{Heaps} \ni && \;H & \;\subtype \fun {\mathit{Locations}} {\mathit{Term\;Optionals}}\\
\mathit{Counters} \ni && m & \;\in\; \mathbb{N}\\
\mathit{Confs} \ni && K   &::=
  \econf mH \term E S
  \alt \cconf mH S \val 
  \alt \sconf mH S \term\\
  &&&\alt \mconf mH {\optional \term} \ell S \val
\end{alignat*}
\caption{Syntactic categories used in the environment-based machine}
\label{fig:syntax}
\end{figure}

\subsubsection{Values.}  Machine representations of values are
representations of weak normal forms that can additionally be
annotated with heap locations. These locations are used to cache full
normal forms. The grammar presented here is a bit counter-intuitive as
it allows nesting of locations like $(v^\ell)^{\ell'}$ or applications
of closures to other values. The machine maintains invariants
guaranteeing that there are no nested locations and that all values
are decoded to weak normal forms (in particular, all applications
involve inert terms). It is possible to write a more precise grammar
here; however, this would lead to many new syntactic categories and to an
increase in the number of transitions of the resulting machine.

\subsubsection{Environments.}
Environments are dictionaries whose keys are identifiers and values
are annotated machine values of the form $\cache \ell \val$.
They represent assignments of values (weak normal forms) to
variables. They can be implemented as association lists (as it is done
explicitly in~\cite{BiernackaBCD20}); other, more efficient options
are considered in Section~\ref{ssec:transition-costs}.

The content of the initial environment $\init$ corresponds to the
result of lookup in the higher-order evaluator when nothing is assigned to
any variable.  We represent it by the empty collection, but as
stated before, the initial environment in our implementation returns an
abstract variable $\avar {x_{\mathit{free}}}$ for variable $x$.

\subsubsection{Stack and Frames.} Stacks are machine representations
of contexts; technically they are just sequences of frames. The first
five frames in the grammar of $F$ are the same as in the KNV machine of~\cite{BiernackaBCD20}.
They are used in representations of rrCbV
contexts, maintaining the same invariants as KNV;  we discuss it in
Section~\ref{ssec:shape}. The last frame, $\fcache \ell$, has similar
meaning to an extra frame in Crégut's KL and KNL of
\cite{Cregut:HOSC07}: it is used to cache a computed normal form under
location $\ell$. Here we use heap to indicate explicitly which
structures of the machine have to be mutable in an implementation.

The stack is the only mechanism responsible for managing the continuation;
in particular the machine has no component that could be recognized as
a \textit{dump} which is present for example in
\cite{DBLP:journals/scp/AccattoliG19}.

\subsubsection{Counter.}
The machine has a counter that is stored in every configuration and is
not duplicated.  It can be seen as a register.  Its role is to
generate fresh names for abstract variables (see
transition~(\ref{tre:9}) in Figure~\ref{fig:transitions}) and it is only
incremented.  It could also be decremented in rule~(\ref{tre:18}) to
maintain de Bruijn level as it is done in KN and KNV, but then
the freshness of variables would be less obvious.

\subsubsection{Heap.} The heap can be seen as a dictionary whose keys are
locations and (optional) values are terms in normal form.  As it is
the case with the counter, the heap appears in every configuration
exactly once and hence it can be implemented in the RAM model as mutable
memory. Then locations $\ell$ can be seen as pointers to such memory
locations.  In the purely functional setting this can be simulated with a
dictionary, causing logarithmic overhead.  However, even with the use of
mutable state, a configuration of the machine can be seen as a
\textit{persistent data structure}. This is because every mutable
location is used only to memoize the normal form of a determined
value; when the normal form is computed, the stored value never
changes.
No mutable pointers are stored in the heap.  Therefore one can say
that every pointer points to a lower level in the pointing hierarchy.
Thus no \textit{reference cycles} are created and garbage collection
can rely solely on \textit{reference counting}.

In terms of \cite{Accattoli-Barras:PPDP17} the heap can be recognized
as a \textit{global environment}.  It is introduced in the evaluator
and preserved by the derivation, so it is present in the machine.
Moreover, together with the local environment used in the machine, it can
be seen as a derived \textit{split environment}, because every value in
the local environment has to be annotated with a location coming from a
distinguished set of identifiers pointing to the global environment.

\subsubsection{Configurations.} The machine uses four kinds of
configurations corresponding to four modes of operation: in
$\cal{E}$-configurations
the
machine evaluates some subterm to a weak normal form; in $\cal{C}$-configurations
it continues with a computed
weak normal form and in
$\cal{S}$-configurations it
continues with a computed strong normal form. $\cal{M}$-configurations
are used to manipulate access to 
memory.

\subsubsection{Transitions.} The transitions of the machine are presented
in Figure~\ref{fig:transitions}. The first one loads an input term to
the initial configuration; similarly the last one unloads the computed
strong normal form from the final configuration.

Transitions~(\ref{tre:1})--(\ref{tre:3}) are completely standard. In
order to evaluate an application of terms, transition (\ref{tre:1})
calls the evaluation of the argument and pushes the current context
represented by a closure pairing the calling function with the current
environment to the stack. Note that this implements the right-to-left
choice of the order of evaluation of arguments. A~lambda abstraction
in (\ref{tre:2}) is already a weak normal form, so we simply change
the mode of operation to a $\cal{C}$-configuration.
Transition (\ref{tre:3})
simply reads a value of a variable from the environment (which always
returns a wnf) and changes the mode of operation.

\begin{figure}[tb]
\begin{align}
   \term &\mapsto \econf 0 \emp \term \init \nil \nonumber\\
\econf mH {\tapp{\term_1}{\term_2}} E {S_1} &\to \econf mH {\term_2} E {\cons{\flclos{\term_1} E} {S_1}} \label{tre:1}\\
\econf mH {\tlam x \term} E {S_1} &\to \cconf mH {S_1} {\abstr x \term E}\label{tre:2}\\
\econf mH {\tvar x} E {S_1} &\to \cconf mH {S_1} {\lookup E x}\label{tre:3}\\
\cconf mH {\cons {\flclos \term E} {S_1}} \val &\to \econf mH \term E {\cons{ \frapp \val} {S_1}}\label{tre:4}\\
\cconf mH {\cons {\frapp {\cache \ell \val}\!\!} {\!S_1}} {\cache {\phantom \ell} {\abstr x \term E}} &\to \econf mH \term {\update E x {\cache \ell \val}} {S_1}\label{tre:5}\\
\cconf mH {\cons {\frapp  {\cache {\xcancel \ell} \val}\!\!} {\!S_1}} {\cache {\phantom \ell}  {\abstr x \term E}} &\to \cconf m {\alloc H \ell \nil} {\cons {\frapp {\cache \ell \val}} {S_1}} {\abstr x \term E}\label{tre:6}\\
\cconf mH {\cons {\frapp  {\cache {\phantom \ell} \val}\!\!} {\!S_1}} {\cache \ell  {\abstr x \term E}} &\to \cconf mH {\cons {\frapp {\cache {\phantom \ell} \val}} {S_1}} {\abstr x \term E}\label{tre:7}\\
\cconf mH {\cons {\frapp {\val}} {S_1}} {\inert} &\to \cconf mH {S_1} {\tapp {\inert} {\val}}\label{tre:8}\\
\cconf mH {S_3} {\abstr x \term E} &\to \nonumber \\ &\econf {m+1} {\alloc {H} {\ell} \nil} \term {\update E x {\cache \ell {\avar {x_m}}}} {\cons {\flam {x_m}} {S_3}}\label{tre:9}\\
\cconf mH {S_2} {\avar x} &\to \sconf mH {S_2} {\tvar x}\label{tre:10}\\
\cconf mH {S_2} {\tapp {\inert} {\val}} &\to \cconf mH {\cons {\flapp {\inert}} {S_2}} {\val}\label{tre:11}\\
\cconf mH {S_2} {\cache \ell \val} &\to \mconf mH {\lookup H \ell} \ell {S_2} \val\label{tre:12}\\
\mconf mH {\some \nf} \ell {S_2} \val &\to \sconf mH {S_2} \nf\label{tre:13}\\
\mconf mH \none \ell {S_2} \val &\to \cconf mH {\cons {\fcache \ell} {S_2}} \val\label{tre:14}\\
\sconf mH {\cons {\fcache \ell} {S_2}} \nf &\to \sconf m {\update H \ell {\some \nf}} {S_2} \nf\label{tre:15}\\
\sconf mH {\cons {\flapp \val} {S_2}} \nf &\to \cconf mH {\cons {\frapp \nf} {S_2}} \val\label{tre:16}\\
\sconf mH {\cons{\frapp {\nf}} {S_2}} {\neu} &\to \sconf mH {S_2} {\tapp {\neu} {\nf}} \label{tre:17}\\
\sconf mH {\cons {\flam x} {S_3}} \nf  &\to \sconf mH {S_3} {\tlam x \nf}\label{tre:18}\\
  \setcounter{equation}{-2}
\sconf mH \nil \term &\mapsto \term \nonumber
\end{align}

\caption{Transition rules of the environment-based
  machine}\label{fig:transitions}
\end{figure}

Configurations of the form $\cconf mH S \val$ continue with a wnf
$\val$ in a~context represented by stack $S$, with heap $H$ and the
value of the variable counter equal to $m$. There are two goals in
these configurations: the first is to finish the evaluation (to wnfs)
of the closures stored on the stack $S$ according to the weak
call-by-value strategy; the second is to reduce $\val$ to a strong
normal form. This is handled by rules (\ref{tre:4})--(\ref{tre:12}),
where rules (\ref{tre:4})--(\ref{tre:8}) are responsible for the first
goal, and rules (\ref{tre:9})--(\ref{tre:12}) for the second.  The
choice of the executed transition is done by pattern-matching on
$\val$ and $S$; we always choose the first matching transition. This
means, in particular, that $\frapp {\cache \ell \val}$ is a right
application of an annotated value in transition (\ref{tre:5}), but in
transition~(\ref{tre:6}) we have a right
application of a value that is not annotated.

In rule (\ref{tre:4}) the stack contains a~closure, so we start
evaluating this closure and push the already computed wnf to the
stack; when this evaluation reaches a~wnf, rules (\ref{tre:6}) or
(\ref{tre:7}) applies. Rule (\ref{tre:6}) is responsible for an
application of a not-annotated abstraction closure to a not-annotated
wnf; in this case a new location $\ell$ is created on the heap and the
wnf is annotated with $\ell$ (later the computed normal form of the
wnf may be stored in $\ell$).  Rule (\ref{tre:5}) is responsible for
an application of a not-annotated lambda abstraction to an annotated
wnf and implements $\beta$-contraction by evaluating the body of the
abstraction in the appropriately extended environment. Rule
(\ref{tre:7}) is responsible for an application of an annotated
abstraction; since $\beta$-contraction is going to change the normal
form of the computed term, we simply remove the information about the
cache; this rule is immediately followed by transition (\ref{tre:5})
or (\ref{tre:6}) and (\ref{tre:5}).  Rule~(\ref{tre:8}) applies when
$\inert$ is an inert term and $\val$ is an arbitrary (annotated or
not) wnf; in this case we reconstruct the application of this inert
term to the wnf popped from the stack (which gives another
wnf).

Rules (\ref{tre:9})--(\ref{tre:12}) are applied when there are no more
wnfs on the top of the stack; here we pattern-match on the currently
processed wnf $\val$. If it is a closure, transition (\ref{tre:9})
implements the `going under lambda' step: it pushes the elementary
context $ {\flam {x_m}}$ to the stack (note that $x_m$ is a fresh
variable here), increments the variable counter, creates a new
location $\ell$ on the heap, adds the (annotated) abstract variable
${\cache \ell {\avar {x_m}}}$ to the environment, and starts the
evaluation of the body.  If $\val$ is an abstract variable, we reach
a~normal form; rule (\ref{tre:10}) changes the mode of operation to
a $\cal{S}$-configuration.
If
$\val$ is an application $\tapp {\inert} {\val}$, rule (\ref{tre:11})
delays the normalization of $\inert$ by pushing it to the stack and
continues with $\val$; note that this implements the second of our
right-to-left choices of the order of reduction. Finally, if $\val$ is
an annotated value, we change the mode of operation to an $\cal{M}$-configuration
to check if the
normal form has already been computed.

In configuration $\mconf {m}{H}{\optional \term}{\ell}{S}{\val}$  heap $H$
contains location $\ell$ pointing to $\optional \term$ which is an optional of
$\val$'s strong normal form. If it contains the normal form $\nf$ then
transition~(\ref{tre:13}) immediately returns $\nf$ and changes the mode
of operation to an $\cal{S}$-configuration.
Otherwise, if it
is empty, rule~(\ref{tre:14}) pushes the location $\ell$ to the
stack and calls the normalization of $\val$. When this normalization
is finished, rule~(\ref{tre:15}) stores the computed normal form at
location $\ell$.

Configurations of the form $\sconf mH St$ continue with a~(strong)
normal form~$t$ in a~context represented by $S$. The goal in these
configurations is to finish the evaluation of inert terms stored on
the stack and to reconstruct the final term. This is handled by
transitions (\ref{tre:15})--(\ref{tre:18}); the choice of the
transition is done by pattern-matching on the stack. If there is an
inert term $\val$ on the top of the stack, rule (\ref{tre:16}) pushes
the already computed normal form on the stack and calls normalization
of $\val$ by switching the mode of operation to
a $\cal{C}$-configuration.
Otherwise there is
a~previously computed normal form or a~$\flam {x}$ frame on the top of
the stack; in these cases transitions (\ref{tre:17}) and (\ref{tr:18})
reconstruct the term accordingly. Finally, when the stack is empty,
the machine stops and unloads the final result from a~configuration.

\begin{remark}
  The machine works on a term representation that allows sharing, as
  in OCaml or in \cite{CondoluciAC19}. In particular, caching of the
  computed normal forms allows their reuse in the construction of the
  output term. For example, exponentially big normal forms of the
  family $e_n = \tlam{x}{\tapp{\tapp{c_n}{\omega}}{x}}$ consume only a
  linear in $n$ amount of memory and are computed in linear time. We
  assume that unloading of the final result does not involve unfolding
  it to the unshared term representation (which might require exponential
  time).
\end{remark}


\section{A Substitution-Based Machine}

The machine from Section~\ref{ssec:eam} uses environments to represent
delayed substitutions. In this section we replace environments with
\emph{delimited substitutions}.
This variant provides an intermediate step between
machine configurations and their decodings, and makes correctness proofs in
Section~\ref{sec:soundness} easier to follow.

\subsection{From variables as pointers to delimited substitution}
Accattoli and Barras in \cite{Accattoli-Barras:PPDP17} present a
technique that represents \emph{variables as pointers} and enables
substitution and lookup of a substituted value in constant time. They
use it to obtain an efficient version of Milner Abstract Machine.
Here we apply this technique to eliminate local environments.

The technique relies on a modified representation of terms, where
identifiers in abstractions and variables are replaced by pointers to
mutable memory. With this representation, a substitution can be
performed in constant time by simply writing to an appropriate memory
cell; similarly the lookup of a variable in an environment boils down to
a single reading operation. However, by overwriting a variable, the
substitution destroys the original term, which, in consequence, cannot
be shared. In order to avoid incorrect modification of shared terms,
some subterms must be copied. In~\cite{DBLP:journals/scp/AccattoliG19}
it is observed that a new copy is needed only when it comes to
$\beta$-reduction: only bodies of $\lambda$-abstractions must be
copied and the only identifier that needs to point to a fresh memory
cell is the one of the abstraction to be applied. Intuitively, this
corresponds to an $\alpha$-renaming of a $\lambda$-abstraction before
a $\beta$-reduction step. It is called \textit{renaming on $\beta$}.

In the next section we fuse copying a term and overwrite-based
substitution into a substitution function working on purely functional
term representation.  The tricky bit is that the copying function does not
copy already substituted variables but shares them instead. To
preserve this property, we use \emph{substitution delimiters} of the form
$\delimit \val$ informing that there are no occurrences of the
substituted variable in subterm~$\val$ and that $\val$ can be shared.

\subsection{A substitution-based machine}
\label{ssec:sam}
The new machine is presented in Figures~\ref{fig:notationsSKNV}
and~\ref{sKNCV}. The transitions in Figures~\ref{fig:transitions}
and~\ref{sKNCV} are in one-to-one correspondence and the two machines
bisimulate each other.
The style of the presentation of the new machine is more implicit,
\eg, the mechanism of fresh name generation and the heap component, though
implicitly present, are not spelled out in configurations. Some new
notation is used in Figure~\ref{sKNCV}: $\fresh{x}$ denotes a fresh
variable, $\lookup H \ell$ is the content of location $\ell$ on the
implicit heap $H$ and $\update {} \ell \cdot$ is an update of location $\ell$.

The syntax of terms is extended with a substitution delimiter
$\delimit{}$ carrying substituted values.  It is the only change in
term representation; input terms do not have to be compiled before
being loaded to the machine because they fall within the extended grammar.
Output terms are free of substitution delimiters.

Every configuration of the environment-based machine can be translated
into a substitution-based one by executing delayed substitution for
free variables and denoting it with substitution delimiters. The main
part of this translation, \ie, translation of closures
$\transl \cdot \cdot$, is given below, where
${E \setminus [x \mapsto -]}$ denotes the environment 
$E$ with removed binding for $x$. This establishes a strong
bisimulation between the machines.

\begin{alignat*}{1}
\transl E {\tvar {x}} &= \begin{cases}\delimit {\lookup E x} &: x \in E\\
x &: x \notin E\end{cases}\\
\transl E {\tapp {\term_1} {\term_2} } &= \tapp {\transl E {\term_1}} {\transl E {\term_2}}\\
\transl E {\tlam {x} {\term}} &= \tlam {x} {\transl {E \setminus [x \mapsto -]} \term}
\end{alignat*}

\begin{figure}[htb]
\textbf{Syntax:}
\begin{alignat*}{3}
x &\;\;\in \mathit{Identifiers}\\
\ell & \;\;\in \mathit{Locations}\\
\term   &::= \tvar{x} \alt \tapp{\term_1}{\term_2} \alt \tlam x \term \alt \delimit \val \\
\val   &::= \avar{x} \alt \tapp{\val_1}{\val_2} \alt \abstrb x \term \alt \cache \ell \val\\
F   &::= \flapp \term
    \alt \frapp \val
    \alt \flapp \val
    \alt \frapp \term
    \alt \flam x
    \alt \fcache \ell \\
S   &::= \nil \alt \cons{F}{S} \\
\optional \term & ::= \none \alt \some \term\\
K   &::=
  \econfb \term S
  \alt \cconfb S \val 
  \alt \sconfb S \term
  \alt \mconfb {\optional \term} \ell S \val
\end{alignat*}
\textbf{Substitutions:}
\begin{equation*}
\begin{aligned}[c]
\subst {x} {\term} {\tvar {x'}} &= \begin{cases} \term &: x = x' \\ {\tvar {x'}} &: x \neq x_1\end{cases}\\
\subst x {\term} {\tapp {\term_1} {\term_2} } &= \tapp {\subst x {\term} {\term_1}} {\subst x {\term} {\term_2}}\\
\subst {x} {\term} {(\tlam {x'} {\term'})} &= \begin{cases} \tlam {x'} {\term'} &: x = x' \\ \tlam {x'} {\subst {x} {\term} {\term'}} &: x \neq x'\end{cases}\hspace{7mm}\\
\subst x {\term} {\delimit \val} &= \delimit \val\\
\strip{x}                 &= \avar x\\
\strip {\delimit \val}   &= \val
\end{aligned}
\end{equation*}
\caption{Syntactic categories and substitutions used in the substitution-based machine}
\label{fig:notationsSKNV}
\end{figure}

\begin{figure}[htb]
\begin{align}
\setcounter{equation}{0}
   \term &\mapsto \econfb \term \nil \nonumber\\
\econfb {\tapp{\term_1}{\term_2}} {S} &\to \econfb {\term_2} {\cons {\flapp{\term_1}} {S}}  \label{tr:1}\\
\econfb {\tlam x \term} {S} &\to \cconfb {S} {\abstrb x \term}\label{tr:2}\\
\econfb {\term} {S} &\to \cconfb {S} {\strip \term}\label{tr:3}\\
\cconfb {\cons {\flapp \term} {S}} \val &\to \econfb \term {\cons{ \frapp \val} {S}}\label{tr:4}\\
\cconfb {\cons {\frapp {\cache \ell \val}\!\!} {\!S}} {\cache {\phantom \ell} {\abstrb x \term}} &\to \econfb {\subst x {\delimit {\cache \ell \val}} \term} {S}\label{tr:5}\\
\cconfb {\cons {\frapp  {\cache {\phantom \ell} \val}\!\!} {\!S}} {\cache {\phantom \ell}  {\abstrb x \term}} &\to \cconfb {\cons {\frapp {\cache \ell \val}} {S}} {\abstrb x \term}\label{tr:6}\\
\cconfb {\cons {\frapp  {\cache {\phantom \ell} \val}\!\!} {\!S}} {\cache \ell  {\abstrb x \term}} &\to \cconfb {\cons {\frapp  {\cache {\phantom \ell} \val}} {S}} {\abstrb x \term}\label{tr:7}\\
\cconfb {\cons {\frapp {\val_2}} {S}} {\val_1} &\to \cconfb {S} {\tapp {\val_1} {\val_2}}\label{tr:8}\\
\cconfb {S} {\abstrb x \term} &\to \econfb {\subst x {\delimit {\cache \ell {\avar {\fresh x}}}} \term} {\cons {\flam {\fresh x}} {S}} \label{tr:9}\\
\cconfb {S} {\avar x} &\to \sconfb {S} {\tvar x}\label{tr:10}\\
\cconfb {S} {\tapp {\val_1} {\val_2}} &\to \cconfb {\cons {\flapp {\val_1}} S} {\val_2}\label{tr:11}\\
\cconfb {S} {\cache \ell \val} &\to \mconfb {\lookup H \ell} \ell {S} \val\label{tr:12}\\
\mconfb {\some \term} \ell {S} \val &\to \sconfb {S} \term\label{tr:13}\\
\mconfb \none \ell {S} \val &\to \cconfb {\cons {\fcache \ell} {S}} \val\label{tr:14}\\
\sconfb {\cons {\fcache \ell} {S}} \term &\to \update {\sconfb {S} \term} \ell {\some \term}\label{tr:15}\\
\sconfb {\cons {\flapp \val} {S}} \term &\to \cconfb {\cons {\frapp \term} {S}} \val\label{tr:16}\\
\sconfb {\cons{\frapp {\term_2}} {S}} {\term_1} &\to \sconfb {S} {\tapp {\term_1} {\term_2}} \label{tr:17}\\
\sconfb {\cons {\flam x} {S}} \term  &\to \sconfb {S} {\tlam x \term}\label{tr:18}\\
\sconfb \nil \term &\mapsto \term \nonumber
\end{align}
\caption{Transitions of the substitution-based machine}\label{sKNCV}
\end{figure}

\section{A substitution-based higher-order evaluator}
\label{app:evaluator}

Using functional correspondence between evaluators and abstract
machines it is also possible to derive a higher-order evaluator
corresponding to the substitution-based machine.  Here we present a
fragment of this evaluator.  The first pattern matching reveals that
the type of terms is extended with the constructor for substitution
delimiters.

\begin{lstlisting}
let env_lookup : term -> value =
  function
  | Var x        -> abstract_variable (x ^ "_free")
  | Subs v       -> v
  | _            -> assert false

let rec eval (t : term) : value =
  match t with
  | Lam (x, t')  -> Abs (x,
      fun v -> eval (subst x (Subs (mount_cache v)) t'))
  | App (t1, t2) -> let v2 = eval t2
                    in from_sem (eval t1) v2
  | x            -> env_lookup x
\end{lstlisting}

\section{Soundness of the machine}
\label{sec:soundness}

The soundness of the obtained machines with respect to the rrCbV
strategy could be argued by reasoning starting from the soundness of
KNV~\cite{BiernackaBCD20} and showing its preservation by the
subsequent program transformations.
Instead, we present a sketch of a direct proof.

\subsection{Decoding of the machine}\label{ssec:decoding}

Below we define a decoding of stacks to contexts and
of machine terms, values and configurations to source terms.

\begin{equation*}
\begin{aligned}[c]
\semt{\tapp {\term_1} {\term_2} } &= \tapp {\semt{\term_1}} {\semt{\term_2}}\\
\semt{\tlam x \term}              &= \tlam x {\semt{\term}}\\
\semt{\tvar x}                    &= \tvar x\\
\semt{\delimit \val}              &= \semv{\val}\\
\semv{\tapp {\val_1} {\val_2} }   &= \tapp {\semv{\val_1}} {\semv{\val_2}}\\
\semv{\abstrb x \term}            &= \tlam x {\semv{\term}}\\
\semv{\avar x}                    &= \tvar x\\
\semv{\cache \ell \val     }      &= \semv{\val}
\end{aligned}
\begin{aligned}[c]
\sems{\nil}                      &= \hole\\
\sems{\cons {\flapp \term} S}    &= \plug {\sems S} {\flapp {\semt \term}}\\
\sems{\cons {\frapp \val} S}     &= \plug {\sems S} {\frapp {\semv \val}}\\
\sems{\cons {\flapp \val} S}     &= \plug {\sems S} {\flapp {\semv \val}}\\
\sems{\cons {\frapp \term} S}    &= \plug {\sems S} {\frapp {\semt \term}}\\
\hspace{1cm}
\sems{\cons {\flam x} S}         &= \plug {\sems S} {\flam x}\\
\sems{\cons {\fcache \ell} S}    &= \sems S\\
~
\end{aligned}
\end{equation*}
\begin{equation*}
\begin{aligned}[c]
\semk{ \econfb \term S } &= \plug {\sems S} {\semt \term} \\
\semk{ \cconfb S \val }    &= \plug {\sems S} {\semv \val}\\
\semk{ \mconfb {\optional \term} \ell S \val }  &= \plug {\sems S} {\semv \val}
\hspace{1cm}\\
\semk{ \sconfb 	S \term}  &=  \plug {\sems S} {\semt \term}
\end{aligned}
\end{equation*}

\subsection{Shape invariants}\label{ssec:shape}

We state more precise shape invariants to assert that evaluation
contexts and values follow the rules of the rrCbV strategy.  Such
invariants can be derived from evaluators explicitly expressing their
own invariants.  In the grammars below numerical subscripts will also
discriminate grammar symbols. The syntactic categories $n$ and
$a$ of normal forms and neutral terms are those defined in
Section~\ref{sec:cbv}.
\begin{alignat*}{4}
{{\val}_\wnf} &::= {\abstrb x \term} &\alt& \val_\inert &\alt& \cache \ell {\val_\wnf} \hspace{2cm}&
\optional \nf & ::= \none \alt \some \nf\\
\val_\inert &::= {\avar x} &\alt& \tapp {\val_\inert} {\val_\wnf} &\alt& \cache \ell {\val_\inert} &
\optional \neu & ::= \none \alt \some \neu
\end{alignat*}
\begin{alignat*}{11}
S_1 &::=&
       \cons{\flapp \term}{S_1}
  &\alt& \cons{\frapp {\val_\wnf}}{S_1}
  &\alt {S_3}\\
S_2 &::=&
		\cons {\fcache \ell} {S_2}
	&\alt& \cons{\frapp{\nf}}{S_2}
	&\alt S_3 \\
S_3 &::=\;&
		\cons {\fcache \ell} {S_3}
	&\alt& \cons{\flapp{\val_\inert}}{S_2}
	&\alt \cons{\flam x}{S_3}
	\alt \nil
\end{alignat*}

\begin{lemma}\label{lem:shape}
  All reachable configurations are well-formed, \ie, are in forms:
  $\econfb \term {S_1}$,
  $\cconfb {S_1} {\val_\wnf}$,
  $\cconfb {S_2} {\val_\inert}$, 
  $\mconfb {\optional \nf}  \ell {S_3} {\val_\wnf}$,
  $\mconfb {\optional \neu} \ell {S_2} {\val_\inert}$,
  $\sconfb {S_2} \neu$,
  $\sconfb {S_3} \nf$.
\end{lemma}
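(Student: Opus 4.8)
The statement is a preservation-of-invariants property, so the plan is to prove it by induction on the number of transitions performed from an initial configuration $\econfb\term\nil$, where $\term$ is a delimiter-free input term. To make the induction go through I would strengthen the hypothesis with two auxiliary invariants: (i) every substitution delimiter $\delimit\val$ occurring anywhere in a reachable configuration carries a value $\val$ of category $\val_\wnf$; and (ii) every location in the domain of the (implicit) heap stores either $\none$ or $\some\nf$ for some normal form $\nf$, where moreover $\nf$ is a neutral term $\neu$ for any location that annotates an inert value (\ie\ occurs in a sub-value of the form $\cache\ell{\val_\inert}$). Both refinements are needed because rules (\ref{tr:3}), (\ref{tr:12}) and (\ref{tr:13}) move a stripped delimiter, respectively a heap content, unchanged into a configuration, so their target can be recognized as one of the seven listed forms only if the moved object already has the expected shape. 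The base case is immediate: $\nil$ is an $S_3$, hence an $S_1$, so $\econfb\term\nil$ has the first listed form, and (i)--(ii) hold vacuously.

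The inductive step is a case analysis on the transition rule applied. In each case I would first use the induction hypothesis to read off the shape of the source configuration, and then sharpen it using the fact that the machine always fires the \emph{first} matching rule: this excludes the stack tops that a higher-priority rule would have consumed. This sharpening is the crux of the argument. For instance, when rule (\ref{tr:9}) fires from $\cconfb S{\abstrb x\term}$, the hypothesis only gives that $S$ is an $S_1$; but since rules (\ref{tr:4})--(\ref{tr:8}) did not apply, the top frame of $S$ is neither a left term-application frame $\flapp\term$ nor any right-application frame, so $S$ matches neither of the first two $S_1$-alternatives and must in fact be an $S_3$ --- exactly what is needed for the pushed frame $\flam{\fresh x}$ to keep the stack an $S_3$; the body of the target $\cal E$-configuration is unproblematic once one also observes that the newly created delimiter carries $\cache\ell{\avar{\fresh x}}$, a $\val_\inert$ and hence a $\val_\wnf$, which re-establishes (i). The same kind of reading-off, combined with the disjointness of terms from values (so that term-frames and value-frames are never confused), handles (\ref{tr:10})--(\ref{tr:12}), (\ref{tr:14}) and (\ref{tr:16}). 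For the remaining rules the verification is routine, using the inclusions $S_3 \subseteq S_1 \cap S_2$, $\val_\inert \subseteq \val_\wnf$ and $\neu \subseteq \nf$, together with the closure facts that substituting a term for a variable in a term yields a term, that $\tapp{\val_\inert}{\val_\wnf}$ is a $\val_\inert$, that $\tapp\neu\nf$ is a $\neu$, and that $\tlam x\nf$ is a $\nf$. The memoization rules (\ref{tr:12})--(\ref{tr:15}) are the only ones in which (ii) is consumed and re-established: (\ref{tr:12}) transfers $\lookup H\ell$ into the optional slot of an $\cal M$-configuration, which by (ii) is an $\optional\nf$ (neutral when the annotated value is inert), so the target has one of the two $\cal M$-shapes; (\ref{tr:13}) then delivers a normal form of the matching kind to an $\cal S$-configuration; and (\ref{tr:15}) writes back a term whose kind --- normal form versus neutral --- is fixed by whether the enclosing stack is an $S_3$ or an $S_2$, which itself was inherited through (\ref{tr:14}) from the way the location was created by (\ref{tr:6}) or (\ref{tr:9}).

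I expect the main obstacle to be this bookkeeping rather than any single difficult argument: the grammars $S_1$, $S_2$, $S_3$ are mutually recursive and overlapping ($S_3$ sits inside both $S_1$ and $S_2$), a single reachable configuration may legitimately satisfy several of the seven forms at once, and for every rule that pushes a frame one has to show that the source stack lies in the narrower category $S_3$ --- which holds only because a higher-priority rule would otherwise have fired. Threading the heap and delimiter invariants through the memoization transitions is the other delicate point, those rules being the only places where previously stored data re-enters the machine state; everything else is mechanical. As a sanity check one could instead transport the analogous invariant of the KNV machine of~\cite{BiernackaBCD20} along the program transformations of Section~\ref{ssec:eam}, but the direct induction sketched above is self-contained and no longer.
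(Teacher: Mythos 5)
Your overall strategy --- induction on the transition sequence, strengthened with auxiliary invariants about delimiters and heap contents, and sharpened at each step by the first-matching-rule discipline --- is exactly the argument the paper intends; the paper's own proof consists only of the one-line idea that the initial configuration is well-formed and well-formedness is preserved by all transitions, so your proposal is a faithful and considerably more explicit elaboration of it. Your handling of rules (\ref{tr:3}) and (\ref{tr:9}), your use of rule priority to force the stack into the narrower category $S_3$ before a push, and your identification of the delimiter and heap invariants are all correct and are precisely the points the paper leaves implicit.

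There is, however, one concrete gap: you file rule (\ref{tr:8}) under the ``routine'' cases and invoke the closure fact that $\tapp{\val_\inert}{\val_\wnf}$ is a $\val_\inert$, but nothing in your strengthened hypothesis guarantees that the value $\val_1$ in $\cconfb{\cons{\frapp{\val_2}}{S}}{\val_1}$ is inert. The induction hypothesis only gives $\val_1 \in \val_\wnf$, and the priority argument (rules (\ref{tr:5})--(\ref{tr:7}) did not fire) only excludes the cases $\abstrb x \term$ and $\cache \ell {\abstrb x \term}$; a value of the form $\cache \ell {\cache {\ell'} {\abstrb x \term}}$ is a legitimate $\val_\wnf$ under the stated grammar, is caught by none of (\ref{tr:5})--(\ref{tr:7}), and would make rule (\ref{tr:8}) produce an application that is not a $\val_\inert$, breaking the invariant. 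To close this case you must carry a third auxiliary invariant --- annotations are never nested (or at least abstractions are never multiply annotated) --- which holds because rule (\ref{tr:6}) only annotates values that are not yet annotated and rule (\ref{tr:9}) annotates a fresh abstract variable, so no transition ever creates a nested annotation. This is exactly the invariant the paper mentions informally in Section~\ref{ssec:eam} (``no nested locations'') and again after Corollary~\ref{cor:wnf}, while deliberately keeping it out of the grammar of Lemma~\ref{lem:shape}; your induction does not go through without it. The repair is a one-line addition to the induction hypothesis, but as written the step for rule (\ref{tr:8}) would fail.
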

\begin{proof}[idea]The initial configuration is well-formed and is preserved by all transitions.\end{proof}

\begin{corollary}\label{cor:Rctx}
  If $S$ is a reachable stack of the machine then context $\sems S$ is a rrCbV context.
\end{corollary}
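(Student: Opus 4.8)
The plan is to read the corollary off Lemma~\ref{lem:shape}. That lemma shows that every reachable stack is generated by one of the three nonterminals $S_1$, $S_2$, $S_3$, so it suffices to prove, by a simultaneous induction on these derivations, that each of $\sems{S_1}$, $\sems{S_2}$, $\sems{S_3}$ is a rrCbV context, that is, is generated by the nonterminal $R$ of Section~\ref{sec:cbv}. Stated plainly, this property does not survive the induction, so I would strengthen it to track the ``type of the hole'': call a stack $S$ \emph{$X$-good}, for $X\in\{R,H,F\}$, when $\sems{S}[D]$ is an $R$-context for every $X$-context $D$. The claim to establish by induction is then that every $S_3$-stack is $R$-good, every $S_2$-stack is $H$-good, and every $S_1$-stack is $F$-good. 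This yields the corollary, since $\hole$ is both an $R$-context and an $F$-context: $R$-goodness and $F$-goodness immediately give that $\sems{S_3}=\sems{S_3}[\hole]$ and $\sems{S_1}=\sems{S_1}[\hole]$ are $R$-contexts, while $\sems{S_2}$ is an $R$-context because it decomposes as the $R$-context decoding its enclosed $S_3$-segment with $F$-contexts plugged into the hole --- and an $R$-context with an $F$-context in its hole is again an $R$-context.

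The induction runs on a dictionary between the productions of the stack grammar and those of the $R/H/F$ grammar, obtained by unfolding the decoding of Section~\ref{ssec:decoding}: prepending the frame $\flam x$ composes the hole with $\tlam x\hole$, mirroring $R\to\tlam x R$; prepending $\flapp{\val_\inert}$ composes with $\tapp\inert\hole$ (where $\inert=\semv{\val_\inert}$ is inert by Lemma~\ref{lem:shape}), mirroring $H\to\tapp\inert R$; prepending $\frapp\nf$ composes with $\tapp\hole\nf$ for a normal form $\nf$ (which is in particular a weak normal form), mirroring $H\to\tapp H\nf$; prepending $\flapp\term$ or $\frapp{\val_\wnf}$ composes with $\tapp\term\hole$ or $\tapp\hole{\wnf}$, mirroring the two productions of $F$; and $\fcache\ell$ leaves the decoding unchanged. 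In each inductive case the $X$-goodness of the shorter stack transports to the longer one by a one-line grammar check: if $D$ is an $R$-context then $\tlam x D$ is an $R$-context; if $D$ is an $R$-context then $\tapp\inert D$ is an $H$-context; if $D$ is an $H$-context then $\tapp D\nf$ is an $H$-context; if $D$ is an $F$-context then so are $\tapp\term D$ and $\tapp D{\wnf}$; and every $H$-context and every $F$-context is an $R$-context (this last fact handling the cases where an $S_3$-stack is reused as an $S_2$- or $S_1$-stack). The base case is $\sems{\nil}=\hole$, which is trivially $R$-good. The only supporting statements that themselves require an induction are the elementary ``every normal form is a weak normal form'' and ``an $R$-context with an $F$-context plugged into its hole is an $R$-context'' (the latter by a simultaneous induction on the grammars of $F$, $H$, $R$).

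The real difficulty is pinning down this strengthened invariant; no individual case is hard. The naive statement ``$\sems{S}$ is an $R$-context'' fails because the families $F\subsetneq R$ and $H\subsetneq R$ are not individually closed under the compositions that occur: plugging $\tlam x\hole$ into an arbitrary $R$-context need not yield an $R$-context --- for instance into the $F$-context $((\tlam z z)(\tlam z z))\,\hole$. What rescues the induction is that one never plugs into an arbitrary context of a family, but into $\sems{S'}$ for a strictly shorter stack $S'$ whose hole, by the $X$-goodness hypothesis, has been arranged to absorb any context of the relevant family. The subtle point is the hand-off between the three nonterminals: the alternative of $S_3$ that puts a $\flapp{\val_\inert}$ frame on top of an $S_2$-stack reduces an $R$-goodness obligation to the $H$-goodness of that $S_2$-stack, while the rules that let an $S_3$-stack serve as an $S_2$- or $S_1$-stack recover $H$- and $F$-goodness from the $R$-goodness of an $S_3$-segment. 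Once this hierarchy is in place, everything else is a mechanical grammar check, and Corollary~\ref{cor:Rctx} is the observation that an $R$-good stack --- hence also $H$-good and $F$-good --- decodes to an $R$-context.
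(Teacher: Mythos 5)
Your proof is correct and follows essentially the same route as the paper: both reduce the claim via Lemma~\ref{lem:shape} to the fact that decodings of stacks generated by $S_1$, $S_2$, $S_3$ land in the outside-in $R/H/F$ grammar. The only difference is that the paper delegates the equivalence of the inside-out and outside-in context grammars to a citation of~\cite{BiernackaBCD20}, whereas you prove the needed inclusion explicitly by induction with the strengthened ``$X$-goodness'' invariant (correctly observing that $\hole$ is not an $H$-context, so the $S_2$ case needs the extra decomposition through its $S_3$-suffix).
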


\begin{proof}[sketch]
  In~\cite{BiernackaBCD20} it is shown that all rrCbV contexts are
  generated by the (outside-in) grammar of contexts from
  Section~\ref{sec:cbv}, with the starting symbol $R$. It is also shown
  that this grammar is equivalent to the following (inside-out)
  grammar with the starting symbol $S_1$:
\begin{align*}
S_1 &::=
       \plug {S_1} {\flapp \term}
  \alt \plug {S_1} {\frapp \wnf}
  \alt {S_3}\\
S_2 &::=
	 \plug {S_2} {\frapp \nf}
	\alt S_3 \\
  S_3 &::=
	\plug {S_2} {\flapp \inert}
	\alt \plug {S_3} {\flam x}
	\alt \hole
\end{align*}      
Decodings of well-formed stacks follow this grammar. 
\end{proof}

\begin{corollary}\label{cor:wnf}
  If $v$ is a reachable value of the machine
  then term $\semv v$ is a weak normal form.
\end{corollary}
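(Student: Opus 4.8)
The plan is to read this off from the shape invariant of Lemma~\ref{lem:shape}, just as Corollary~\ref{cor:Rctx} was obtained for stacks. By Lemma~\ref{lem:shape}, the value component of every reachable $\cal{C}$- or $\cal{M}$-configuration is generated by the grammar of $\val_\wnf$ or of $\val_\inert$ from Section~\ref{ssec:shape}, and---since those grammars recurse only into values again generated by $\val_\wnf$ or $\val_\inert$ (their sole non-value subterm being the body $\term$ of an abstraction value $\abstrb x \term$)---the same holds for every subvalue of such a value. It therefore suffices to prove, by simultaneous structural induction on the grammars of $\val_\wnf$ and $\val_\inert$, the strengthened claim: (i) if $\val$ is a $\val_\wnf$ then $\semv \val$ is a weak normal form $\wnf$, and (ii) if $\val$ is a $\val_\inert$ then $\semv \val$ is an inert term $\inert$. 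The strengthening to inert terms is the one genuine choice in the proof: the bare statement about weak normal forms does not go through the application case by itself.

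The induction is then a direct read-off from the clauses defining $\semv{\cdot}$. For (i): if $\val = \abstrb x \term$ then $\semv \val = \tlam x {\semt \term}$, a weak normal form by the clause $\wnf ::= \tlam x t$ (no constraint on the body $\semt \term$ is needed); if $\val$ is a $\val_\inert$ then $\semv \val$ is inert by (ii) and hence a weak normal form; and if $\val = \cache \ell {\val'_\wnf}$ then $\semv \val = \semv{\val'_\wnf}$, since caches are transparent under decoding, and we conclude by the induction hypothesis. For (ii): if $\val = \avar x$ then $\semv \val = \tvar x$, inert by $\inert ::= x$; if $\val = \tapp {\val_\inert} {\val_\wnf}$ then $\semv \val = \tapp {\semv{\val_\inert}} {\semv{\val_\wnf}}$, inert by $\inert ::= \tapp \inert \wnf$ using (i) and (ii); and if $\val = \cache \ell {\val'_\inert}$ then $\semv \val = \semv{\val'_\inert}$ is inert by the induction hypothesis.

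The only delicate point---really just bookkeeping---is which machine the statement is about. The decodings and Lemma~\ref{lem:shape} are stated for the substitution-based machine of Figure~\ref{sKNCV}, so the argument above proves the corollary there; the analogous statement for the environment-based machine of Figure~\ref{fig:transitions} follows from the strong bisimulation between the two machines induced by the closure translation $\transl \cdot \cdot$, under which the decoding $\semv{\cdot}$ of a value is unchanged. Alternatively, one reruns the same two-part induction directly on the value grammar of Figure~\ref{fig:syntax} using the environment-based refinement of the shape invariant; the case analysis is the same.
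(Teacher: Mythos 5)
Your proof is correct and follows exactly the route the paper intends: the paper gives no explicit argument for this corollary, treating it as immediate from the shape invariant of Lemma~\ref{lem:shape}, and your simultaneous induction on the $\val_\wnf$/$\val_\inert$ grammars (with the strengthening to inert terms for the application case) is precisely the missing read-off, mirroring the proof sketch of Corollary~\ref{cor:Rctx} for stacks. The remark about transferring the statement to the environment-based machine via the bisimulation is also consistent with how the paper treats the two machines.
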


The second corollary states one of the invariants mentioned in
Subsection~\ref{ssec:eam}. To capture the second fact that annotations
$\cache \ell {}$ cannot be stacked, a more precise shape invariant
could be established. This, however, would require more grammar symbols
and well-formed configurations.

\subsection{Interpretation of transitions}

To omit some technical details we focus on the machine soundness for
closed input terms. It is sufficient because open terms can be closed
by abstractions before processing.

\begin{lemma}\label{lem:overhead}
  If $K \stackrel{(\iota)}{\to} K'$,
  $\iota \notin \{\ref{tr:5}, \ref{tr:9}, \ref{tr:13} \}$
  and term $\semk K$ is closed
  then $\semk K = \semk {K'}$.
\end{lemma}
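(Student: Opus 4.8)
The plan is to prove this by a straightforward case analysis on the transition label~$\iota$, which ranges over all rules of Figure~\ref{sKNCV} except~(\ref{tr:5}),~(\ref{tr:9}) and~(\ref{tr:13}). For each such $\iota$ I would unfold the decodings $\semk{\cdot}$, $\sems{\cdot}$, $\semv{\cdot}$ and $\semt{\cdot}$ on the source configuration $K$ and the target configuration $K'$ and check that the two resulting source terms are literally equal. Three observations make almost every case immediate: (i) $\semk{\cdot}$ mentions neither the implicit heap nor the counter, so the heap allocations and updates performed by~(\ref{tr:6}),~(\ref{tr:12}),~(\ref{tr:14}) and~(\ref{tr:15}) are invisible to it; (ii) cache annotations are transparent, i.e.\ $\semv{\cache{\ell}{\val}}=\semv{\val}$ and $\sems{\cons{\fcache{\ell}}{S}}=\sems{S}$, so attaching, dropping, or consulting a location --- transitions~(\ref{tr:6}),~(\ref{tr:7}),~(\ref{tr:12}),~(\ref{tr:14}),~(\ref{tr:15}) --- never changes the decoded term; and (iii) each frame decodes to exactly the one-hole rrCbV context named by its symbol, so that the ``refocusing'' transitions~(\ref{tr:1}),~(\ref{tr:2}),~(\ref{tr:4}),~(\ref{tr:8}),~(\ref{tr:10}),~(\ref{tr:11}),~(\ref{tr:16})--(\ref{tr:18}) --- which only move the hole of the decoded context past an application or abstraction node, or switch between the ${\cal E}$-, ${\cal C}$- and ${\cal S}$-modes without rewriting the term --- are identities on decodings once the homomorphic clauses of $\semt{\cdot}$ and $\semv{\cdot}$ on applications and abstractions are applied. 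The one remaining rule,~(\ref{tr:3}), is handled by the definition of $\mathsf{strip}$: if the evaluated term is $x$ then $\semt{\tvar{x}}=\tvar{x}=\semv{\avar{x}}=\semv{\strip{x}}$, and if it is $\delimit{\val}$ then $\semt{\delimit{\val}}=\semv{\val}=\semv{\strip{\delimit{\val}}}$. The shape invariants of Lemma~\ref{lem:shape} are used here only to determine which rule fires --- e.g.\ that the value in focus in~(\ref{tr:8}) is inert --- and not for the decoding equality itself.

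I would then record why~(\ref{tr:5}),~(\ref{tr:9}) and~(\ref{tr:13}) are precisely the exceptions, since this is what fixes the statement and where the closedness hypothesis really bites. Transition~(\ref{tr:5}) rewrites $\plug{\sems{S}}{\tapp{(\tlam{x}{\semt{\term}})}{\semv{\val}}}$ into $\plug{\sems{S}}{\subst{x}{\semv{\val}}{\semt{\term}}}$, i.e.\ it performs one $\beta_\wnf$-contraction of $\semk{K}$ --- $\semv{\val}$ is a weak normal form by Corollary~\ref{cor:wnf}, and closedness of $\semk{K}$ makes the $\alpha$-side condition of $\beta_\wnf$-contraction vacuous, so that the machine's raw substitution agrees with the capture-avoiding one. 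Transition~(\ref{tr:9}) descends under a $\lambda$ while $\alpha$-renaming its binder to a fresh $\fresh{x}$, turning $\tlam{x}{\semt{\term}}$ into the $\alpha$-equivalent but not syntactically equal $\tlam{\fresh{x}}{\subst{x}{\tvar{\fresh{x}}}{\semt{\term}}}$ (this needs a small substitution lemma for the decoding). Transition~(\ref{tr:13}) replaces the focus value $\val$ by the contents of its cache, which decodes to a $\beta_\wnf$-reduct of $\semv{\val}$ rather than to $\semv{\val}$ itself. The closedness hypothesis mirrors these three companion statements and keeps the per-transition invariant uniform; in the present lemma it merely guarantees that in~(\ref{tr:3}) the evaluated term is always a substitution delimiter --- for closed inputs every variable that reaches an ${\cal E}$-configuration bare is a fresh abstract variable, and those are stored behind delimiters --- which anyway decodes consistently.

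I do not expect any individual case to be hard: once observations (i)--(iii) are in place, each of the fifteen checks is a one-line unfolding. The real work, and the place to be careful, is the bookkeeping --- enumerating all fifteen rules faithfully and, in particular, tracking the delicate pattern-matching of~(\ref{tr:5})--(\ref{tr:7}), which branches on whether the applied abstraction and its argument carry a cache annotation --- so as to be certain that the rules being compared are exactly the non-excluded ones.
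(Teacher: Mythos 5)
Your proposal is correct and matches the paper's proof, which is exactly the same case analysis on transition rules (stated there in a single line); your observations that the decoding ignores the heap and counter, that cache annotations and $\fcache \ell$ frames are transparent, and that the remaining frames decode homomorphically are precisely what makes each case a one-line unfolding.
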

\begin{proof} By case analysis on transition rules.\end{proof}

\begin{lemma}\label{lem:beta}
  If $K$ is a reachable configuration,
  term $\semk K$ is closed
  and $K \stackrel{(\ref{tr:5})}{\to} K'$
  then $\semk K \red R {\beta_\wnf} \semk {K'}$.
\end{lemma}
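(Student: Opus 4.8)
The plan is to unfold both sides of the claimed reduction explicitly, using the decoding $\semk\cdot$, and then recognise the resulting pair as an instance of $\beta_\wnf$-contraction placed inside a rrCbV context. By Corollary~\ref{cor:Rctx} the stack $S$ decodes to a rrCbV context $\sems S$, so it suffices to exhibit a $\beta_\wnf$-redex/contractum pair at the hole. First I would write out transition~(\ref{tr:5}): the configuration $K$ is $\cconfb {\cons {\frapp {\cache \ell \val}} S} {\cache {\phantom\ell} {\abstrb x \term}}$ and $K'$ is $\econfb {\subst x {\delimit {\cache \ell \val}} \term} S$. Decoding the source side, $\semk K = \plug {\sems S} {\tapp {\semv{\abstrb x \term}} {\semv{\cache\ell\val}}} = \plug {\sems S} {\tapp {(\tlam x {\semv\term})} {\semv\val}}$ (the cache annotation is transparent under $\semv\cdot$). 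Decoding the target side, $\semk {K'} = \plug {\sems S} {\semt{\subst x {\delimit {\cache\ell\val}} \term}}$.

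The crux is then a substitution lemma: for every machine term $\term$ and value $\val$,
\begin{equation*}
  \semt{\subst x {\delimit \val} \term} = \subst x {\semv\val} {\semt\term},
\end{equation*}
proved by a routine induction on $\term$ following the substitution clauses in Figure~\ref{fig:notationsSKNV} and the decoding clauses of Section~\ref{ssec:decoding}; the base case $\term = \tvar x$ uses $\semt{\delimit\val} = \semv\val$, the delimiter case uses $\subst x {\delimit\val} {\delimit{\val'}} = \delimit{\val'}$, and the $\lambda$-case is where the side condition matters. Applying this lemma with $\val := \cache\ell\val$ gives $\semk{K'} = \plug {\sems S} {\subst x {\semv\val} {\semt\term}}$, which is exactly the contractum of the $\beta_\wnf$-redex appearing in $\semk K$ --- provided the contraction is licensed, i.e.\ provided $\semv\val$ is a weak normal form (so the argument qualifies as a $\wnf$), and provided the $\alpha$-conversion/freshness side conditions of $\contr{\beta_\wnf}$ can be met. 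The first is Corollary~\ref{cor:wnf}. For the second, I would appeal to the fact that $\semk K$ is closed together with the standard convention on Barendregt-style variable hygiene: since the term is closed, $\subst x {\semv\val}{\semt\term}$ computed on the nominal representation agrees with capture-avoiding substitution up to $\alpha$, so $\term_1 \converts{}\alpha \term_1'$ with $\disjoint{\bv{\term_1'}}{\fv\wnf}$ is satisfiable. Finally, by Corollary~\ref{cor:Rctx}, $\sems S$ is a rrCbV context, so plugging the redex/contractum pair into it yields $\semk K \red R {\beta_\wnf} \semk{K'}$.

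The main obstacle I expect is the bookkeeping around $\alpha$-conversion and freshness in the $\lambda$-case of the substitution lemma and in discharging the side condition of $\contr{\beta_\wnf}$: the machine manipulates nominal terms with "raw" substitution, whereas the strategy $\red R {\beta_\wnf}$ is defined modulo $\alpha$ with an explicit disjointness premise. Closing the input term removes the genuinely dangerous case (capture of a \emph{free} variable of $\term$), and fresh-name generation in transitions~(\ref{tr:9}) and~(\ref{tr:18}) ensures that bound variables introduced by the machine are globally fresh; combined, these justify treating the raw substitution performed in~(\ref{tr:5}) as a legitimate $\beta_\wnf$-contraction. I would isolate this as an auxiliary invariant (``all bound variables occurring in a reachable configuration are fresh w.r.t.\ the free variables'') maintained alongside Lemma~\ref{lem:shape}, and then the rest of the argument is the mechanical decoding computation sketched above.
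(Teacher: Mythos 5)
Your overall route is the paper's: decode both configurations, reduce the claim to a commutation of machine substitution with decoding, and invoke Corollaries~\ref{cor:Rctx} and~\ref{cor:wnf} for the context and the argument. The decoding computation and the handling of those two side conditions are fine. But the substitution lemma you place at the centre, $\semt{\subst x {\delimit \val} \term} = \subst x {\semv\val} {\semt\term}$ for \emph{every} machine term $\term$, is false as stated, and the case where it fails is exactly the one you wave through. Machine substitution is opaque at delimiters ($\subst x {\delimit\val}{\delimit{\val'}} = \delimit{\val'}$), whereas the source-level substitution on the right-hand side does descend into $\semt{\delimit{\val'}} = \semv{\val'}$. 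So at the delimiter case the claimed equality reads $\semv{\val'} = \subst x {\semv\val}{\semv{\val'}}$, which holds only when $x \notin \fv{\semv{\val'}}$. Since delimiters are introduced by earlier firings of (\ref{tr:5}) and (\ref{tr:9}) and persist under lambdas, the body $\term$ of a reachable abstraction does in general contain delimited values, so this case is not vacuous.

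The missing ingredient is precisely the invariant the paper's proof makes central: every free variable of a value under a delimiter in a reachable configuration is bound by the stack (by some $\flam y$ frame). The variable $x$ substituted in (\ref{tr:5}) is bound by the abstraction $\abstrb x \term$, not by the stack, so $x$ cannot occur free in any delimited value inside $\term$, and the delimited substitution agrees with the full one. Your proposed auxiliary invariant (global freshness of bound variables) addresses capture --- the other half of the hygiene problem, which is indeed also needed --- but it does not by itself rule out $x$ occurring under a delimiter. Restrict your substitution lemma to terms satisfying the ``free variables under delimiters are stack-bound'' invariant (maintained alongside Lemma~\ref{lem:shape}), and the rest of your argument goes through as in the paper.
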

\begin{proof}[sketch]
  Transitions maintain the invariant that all free variables of terms
  under delimiters are bound by the stack.  Hence they are not
  captured during substitution, and substitution can be delimited:
  the substituted variable does not occur under delimiter and
  $\beta$-contraction is simulated properly.  From
  Corollaries~\ref{cor:Rctx} and~\ref{cor:wnf} it follows that an
  evaluation context is an $R$-context and a substituted value decodes
  to a weak normal form.
\end{proof}

\begin{lemma}\label{lem:alpha}
  If $K$ is a reachable configuration,
  term $\semk K$ is closed
  and $K \stackrel{(\ref{tr:9})}{\to} K'$
  then $\semk K \red {} {\alpha} \semk {K'}$.
\end{lemma}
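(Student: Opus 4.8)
The plan is to follow the same route as in Lemma~\ref{lem:beta}, except that the step witnessed on the source side is now a single $\alpha$-contraction rather than a $\beta_\wnf$-reduction, which makes the argument lighter. First I would pin down the shape of the transition: rule~(\ref{tr:9}) fires only from a configuration $K = \cconfb S {\abstrb x \term}$, and then
\[
  K' \;=\; \econfb {\subst x {\delimit {\cache \ell {\avar {\fresh x}}}} \term} {\cons {\flam {\fresh x}} S},
\]
where $\fresh x$ is a fresh variable (and $\ell$ a fresh implicit location). Unfolding the decodings of Section~\ref{ssec:decoding}, using that caches and the heap do not appear in decodings and that $\sems{\cons {\flam {\fresh x}} S} = \plug {\sems S} {\flam {\fresh x}}$, I obtain
\[
  \semk K = \plug {\sems S} {\tlam x {\semt \term}},
  \qquad
  \semk{K'} = \plug {\sems S} {\tlam {\fresh x} {\semt{\subst x {\delimit {\cache \ell {\avar {\fresh x}}}} \term}}}.
\]

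The one auxiliary fact I would isolate is that decoding commutes with the machine's substitution: for a term $\term$ occurring in a reachable configuration and any value $\val$,
\[
  \semt{\subst x {\delimit \val} \term} \;=\; \subst x {\semv \val} {\semt \term},
\]
provided $x$ does not occur free in $\semv {\val'}$ for any delimited subvalue $\delimit {\val'}$ of $\term$. This proviso is exactly the stack-binding invariant already invoked in the proof of Lemma~\ref{lem:beta}: free variables occurring under delimiters are bound by the surrounding stack and hence differ from the binder $x$ of $\abstrb x \term$, so I would either reuse that invariant or record it as a standalone lemma about reachable configurations. Granting the proviso, the commutation is a routine induction on $\term$ tracking the clauses of machine substitution; note that both the machine- and the source-level substitutions are the ``raw'' (non-capture-avoiding) ones and that neither descends under a delimiter --- on the source side this is harmless exactly because of the proviso. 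Instantiating with $\val = \cache \ell {\avar {\fresh x}}$ and $\semv{\cache \ell {\avar {\fresh x}}} = \fresh x$ gives $\semt{\subst x {\delimit {\cache \ell {\avar {\fresh x}}}} \term} = \subst x {\fresh x} {\semt \term}$.

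To conclude, since $\fresh x$ is fresh we have $\fresh x \notin \fv {\semt \term} \cup \bv {\semt \term}$, so the $\alpha$-contraction rule yields $\tlam x {\semt \term} \contr \alpha \tlam {\fresh x} {\subst x {\fresh x} {\semt \term}}$; closing $\contr\alpha$ under the context $\sems S$ (which carries no side condition) gives
\[
  \semk K = \plug {\sems S} {\tlam x {\semt \term}}
  \;\red{}{\alpha}\;
  \plug {\sems S} {\tlam {\fresh x} {\subst x {\fresh x} {\semt \term}}}
  = \semk{K'},
\]
which is the claim. I expect the only real work to be the commutation fact --- specifically, discharging the non-capture proviso, which is the same invariant Lemma~\ref{lem:beta} relies on; once that is available the $\alpha$-step itself is immediate, so this proof is shorter than that of Lemma~\ref{lem:beta}, requiring no reasoning about $R$-contexts or weak normal forms.
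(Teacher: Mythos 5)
Your proof is correct and follows essentially the same route as the paper's (much terser) sketch: both arguments rest on exactly the two facts you identify, namely that the stack-binding invariant from Lemma~\ref{lem:beta} ensures the substituted variable does not occur under delimiters (so machine substitution commutes with decoding), and that freshness of $\fresh x$ discharges the side condition of the $\alpha$-contraction rule. Your write-up just makes explicit the commutation lemma and the unfolding of the decodings that the paper leaves implicit.
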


\begin{proof}[sketch]
  Thanks to the fresh variable $\alpha$-contraction is simulated correctly.
  As in Lemma~\ref{lem:beta}, the substituted variable (here $x$) does
  not occur under delimiters and the free variable ${\fresh x}$ is
  bound by the stack.

\end{proof}

\begin{lemma}[bypass]\label{lem:bypass}
  If $K$ is a reachable configuration,
  term $\semk K$ is closed
  and $K \stackrel{(\ref{tr:13})}{\to} K'$
  then $\semk K \rred R {\beta_\wnf} \converts {} \alpha \semk {K'}$.
\end{lemma}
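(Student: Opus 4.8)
The plan is to treat transition~(\ref{tr:13}) as a \emph{cache hit} and to reduce the lemma to a heap invariant saying that every stored term is a legitimate rrCbV-normal form of the value for which it was computed. The first observation is that the annotated value carried by a location is fixed for the whole run: a cache $\cache\ell\val$ is created only once, with a fresh $\ell$, by transition~(\ref{tr:6}) or~(\ref{tr:9}), and afterwards it is only copied around --- by substitution into bodies, by $\strip\cdot$, and by (de)composition of inert values --- never altered inside; write $\val_\ell$ for this value (so in $K=\mconfb{\some\term}{\ell}{S}{\val}$ we have $\val=\val_\ell$), and note that a cell is filled at most once, by~(\ref{tr:15}). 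The invariant I would maintain is: whenever $\lookup H\ell=\some\term$, the term $\term$ is a delimiter-free normal form and $\semv{\val_\ell}\rred R{\beta_\wnf}\converts{}\alpha\term$. It has to be proved by a single induction on the length of the run that simultaneously re-proves Lemmas~\ref{lem:overhead}--\ref{lem:bypass}, because the four soundness statements and the heap invariant are mutually dependent: transition~(\ref{tr:13}) uses the invariant, whereas filling a cell by~(\ref{tr:15}) uses the soundness of the preceding sub-run, which may itself contain earlier uses of~(\ref{tr:13}).

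Given the invariant, the lemma is short: $\semk K=\plug{\sems S}{\semv{\val_\ell}}$ and $\semk{K'}=\plug{\sems S}{\semt\term}=\plug{\sems S}{\term}$ (the last step because $\term$ is delimiter-free), so it suffices to plug the reduction $\semv{\val_\ell}\rred R{\beta_\wnf}\converts{}\alpha\term$ under the context $\sems S$. That this yields a genuine rrCbV reduction uses that $\sems S$ is a rrCbV context (Corollary~\ref{cor:Rctx}), that $\semv{\val_\ell}$ is a weak normal form (Corollary~\ref{cor:wnf}) --- so every redex contracted along the sequence lies strictly inside $\semv{\val_\ell}$ --- and that at an ${\cal M}$-configuration the stack has shape $S_2$ or $S_3$ (Lemma~\ref{lem:shape}): unlike a general $S_1$-stack, such a stack carries no $\frapp{\val_\wnf}$ frame, so $\sems S$ composes with the rrCbV contexts arising inside $\semv{\val_\ell}$ to give rrCbV contexts again.

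To establish the invariant I would inspect the only filling transition,~(\ref{tr:15}): $\sconfb{\cons{\fcache\ell}{S_0}}{\term}\to\update{\sconfb{S_0}{\term}}{\ell}{\some\term}$. The frame $\fcache\ell$ was pushed by an earlier~(\ref{tr:14}) applied to $\mconfb{\none}{\ell}{S_0}{\val_\ell}$, itself reached from $\cconfb{S_0}{\cache\ell{\val_\ell}}$ by~(\ref{tr:12}), and between that~(\ref{tr:14}) and this~(\ref{tr:15}) the machine runs from $\cconfb{\cons{\fcache\ell}{S_0}}{\val_\ell}$ to $\sconfb{\cons{\fcache\ell}{S_0}}{\term}$ with the bottom frame $\fcache\ell$ untouched (only~(\ref{tr:15}) consumes it). Applying Lemma~\ref{lem:overhead} to the decoding-preserving steps, Lemmas~\ref{lem:beta} and~\ref{lem:alpha} to the~(\ref{tr:5})- and~(\ref{tr:9})-steps, and the induction hypothesis of Lemma~\ref{lem:bypass} to any~(\ref{tr:13})-steps occurring in this shorter (hence earlier) sub-run --- all legitimate since the decoded configurations stay closed along any run --- and using $\sems{\cons{\fcache\ell}{S_0}}=\sems{S_0}$, gives $\plug{\sems{S_0}}{\semv{\val_\ell}}\rred R{\beta_\wnf}\converts{}\alpha\plug{\sems{S_0}}{\term}$; since the stack below $\fcache\ell$ stays passive during this sub-run, every contracted redex lies strictly inside $\semv{\val_\ell}$, so peeling off the outer context $\sems{S_0}$ yields $\semv{\val_\ell}\rred R{\beta_\wnf}\converts{}\alpha\term$, while the sub-run ending in an ${\cal S}$-configuration forces $\term$ to be a delimiter-free normal form; this is the invariant for $\ell$, and for every other location nothing relevant is changed and no other transition writes to the heap. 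The main obstacle I anticipate is exactly this ``passivity'' of the underlying stack --- that normalizing a value disturbs only the frames above it --- together with its companion use in the previous paragraph that a rrCbV sub-reduction of a weak normal form may be transported between the context present when a cell is filled and the one present when it is read; both rest on a careful exploitation of the shape invariants of Lemma~\ref{lem:shape} and on the closure of the $S_2$/$S_3$ context families under the relevant compositions. A secondary point is the bookkeeping of the simultaneous induction over all four soundness lemmas.
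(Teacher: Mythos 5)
Your argument is sound and rests on the same underlying fact as the paper's proof idea --- that the term stored at $\ell$ is the full normal form of $\semv{\val_\ell}$, produced by the sound sub-run delimited by the $\fcache \ell$ frame between transitions (\ref{tr:14}) and (\ref{tr:15}) --- but you realize this fact by a genuinely different route. The paper argues \emph{counterfactually}: had the machine taken the cache-miss branch (\ref{tr:14}) instead of (\ref{tr:13}), it would have recomputed the normal form by a legitimate run in the \emph{current} configuration, so Lemmas~\ref{lem:overhead}--\ref{lem:alpha} (and, inductively, Lemma~\ref{lem:bypass} itself) applied to that hypothetical run directly yield the required reduction from $\semk K$ to an $\alpha$-variant of $\semk {K'}$, with no need to move a reduction between contexts; what that route must justify instead is that the hypothetical recomputation terminates and returns a result $\alpha$-equivalent to the stored one (by uniqueness of full normal forms and freshness of generated names). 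You argue \emph{factually}: you distill from the past filling sub-run a context-free reduction $\semv{\val_\ell} \rred R {\beta_\wnf} \converts {} \alpha \term$ (your heap invariant) and then re-plug it under the present stack. This buys a cleanly stated, reusable invariant, at the price of two extra context lemmas that the paper's route sidesteps: peeling the reduction out of $\sems{S_0}$ (passivity of the stack below $\fcache \ell$, plus the fact that truncating a well-shaped stack at $\fcache \ell$ leaves a well-shaped stack decoding to an rrCbV context) and re-plugging it under $\sems{S}$ (closure of the $S_2$/$S_3$ context families under composition with the $R$-contexts arising inside a weak normal form). Both are provable from the grammars of Section~\ref{ssec:shape} and Corollary~\ref{cor:Rctx}, but they do need to be stated and proved; likewise the $\converts {} \alpha$ part must be kept localized inside the focused subterm when you peel off $\sems{S_0}$. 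Your explicit treatment of the mutual induction across Lemmas~\ref{lem:overhead}--\ref{lem:bypass}, forced by the fact that the filling sub-run may itself contain cache hits, makes precise something the paper leaves implicit. In short, the proposal is a correct ``store-and-transport'' rendering of the argument where the paper uses ``replay-in-place''.
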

\begin{proof}[idea]
  A normal form can be memoized only by getting off a $\fcache \ell$
  frame by transition (\ref{tr:15}).  After pushing it on the stack by
  transition (\ref{tr:14}) the only way to do that is to compute the
  full normal form of a given weak normal form.  Thus, if machine had used
  transition (\ref{tr:14}) instead of (\ref{tr:13}) it would have maintained
  the shape invariant, the evaluation context would be still a
  $R$-context, and the computed full normal form would be
  $\alpha$-equivalent.  By standard properties of $\alpha$-conversion
  its uses in transition (\ref{tr:9}) can be postponed.
\end{proof}


\begin{proposition}\label{prop:soundness}
  If $K$ is a reachable configuration,
  term $\semk K$ is closed
  and $K {\rred {} {}} K'$
  then $\semk K \rred R {\beta_\wnf} \converts {} \alpha \semk {K'}$.
\end{proposition}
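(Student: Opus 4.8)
The plan is to prove the proposition by induction on the length $n$ of the derivation $K = K_0 \to K_1 \to \dots \to K_n = K'$, carrying as inductive invariant that $K_i$ is reachable, $\semk{K_i}$ is closed, and $\semk{K_0} \rred R {\beta_\wnf} \converts{}\alpha \semk{K_i}$. The base case $n=0$ is immediate. For the inductive step I first note that the invariant's side conditions propagate: $K_{i+1}$ is reachable because $K_i$ is, and $\semk{K_{i+1}}$ is closed because, by one of Lemmas~\ref{lem:overhead}--\ref{lem:bypass}, the single step $K_i \to K_{i+1}$ either leaves the decoding unchanged, or replaces it by an $\alpha$-variant, or by a $\beta_\wnf$-reduct, possibly followed by an $\alpha$-variant, and none of these enlarges the set of free variables (an $\alpha$-variant has the same free variables, and $\fv{\subst x \wnf {\term'}} \subseteq \fv{\tapp{(\tlam x {\term'})}{\wnf}}$). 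Hence the closedness hypotheses of the four lemmas remain available along the whole derivation.

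For the reduction itself, assume $\semk{K_0} \rred R {\beta_\wnf} u \converts{}\alpha \semk{K_i}$ and consider $K_i \stackrel{(\iota)}{\to} K_{i+1}$. If $\iota \notin \{\ref{tr:5},\ref{tr:9},\ref{tr:13}\}$ then $\semk{K_i} = \semk{K_{i+1}}$ by Lemma~\ref{lem:overhead} and the invariant is preserved verbatim. If $\iota = \ref{tr:9}$ then $\semk{K_i} \red{}{\alpha}\semk{K_{i+1}}$ by Lemma~\ref{lem:alpha}, so $u \converts{}\alpha \semk{K_{i+1}}$ by transitivity of $\converts{}\alpha$. If $\iota = \ref{tr:5}$ then $\semk{K_i} \red R {\beta_\wnf} \semk{K_{i+1}}$ by Lemma~\ref{lem:beta}, and if $\iota = \ref{tr:13}$ then $\semk{K_i} \rred R {\beta_\wnf} w \converts{}\alpha \semk{K_{i+1}}$ for some $w$ by Lemma~\ref{lem:bypass}. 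In these last two cases the situation is $u \converts{}\alpha s \rred R {\beta_\wnf} t$ (with $s = \semk{K_i}$ and $t$ equal to $\semk{K_{i+1}}$, resp. $\alpha$-equivalent to it), and I need to reshape it into $u \rred R {\beta_\wnf} t' \converts{}\alpha t$ so that the $\beta_\wnf$-sequences compose. This uses postponement of $\alpha$-conversion past $\beta_\wnf$-reduction: since $\beta_\wnf$-contraction already builds an $\alpha$-renaming into its definition and $\converts{}\alpha$ is a context-closed equivalence, any $\beta_\wnf$-step out of an $\alpha$-variant of a term is matched up to $\alpha$ by a $\beta_\wnf$-step out of the term itself; iterating this along $s \rred R {\beta_\wnf} t$ gives $u \rred R {\beta_\wnf} t' \converts{}\alpha t$. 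Composing with the inductive hypothesis yields $\semk{K_0} \rred R {\beta_\wnf} t' \converts{}\alpha \semk{K_{i+1}}$ (using $t \converts{}\alpha \semk{K_{i+1}}$ in case~\ref{tr:13}), which closes the induction. Instantiating at $i = n$ gives the proposition.

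The main obstacle is exactly this reorganisation of $\converts{}\alpha \cdot \rred R {\beta_\wnf}$ into $\rred R {\beta_\wnf} \cdot \converts{}\alpha$: one must verify that $\red R {\beta_\wnf}$ is stable under $\alpha$-renaming of both the contracted redex and the surrounding evaluation context, the latter relying on Corollary~\ref{cor:Rctx} together with closure of the rrCbV context grammar under $\alpha$. A secondary point that must be handled carefully is that the closedness side condition of Lemmas~\ref{lem:overhead}--\ref{lem:bypass} is needed at \emph{every} intermediate configuration, not just at $K_0$; this is why closedness of $\semk{K_i}$ is threaded through the induction invariant rather than derived once at the start.
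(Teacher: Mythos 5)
Your proof is correct and follows essentially the same route as the paper, which simply composes Lemmas~\ref{lem:overhead}--\ref{lem:bypass} over the transition sequence; you merely make explicit the induction on the derivation length, the propagation of the closedness hypothesis, and the postponement of $\alpha$-conversion past $\beta_\wnf$-steps that the paper leaves implicit (and itself invokes inside Lemma~\ref{lem:bypass}).
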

\begin{proof}
  This is an immediate consequence of Lemmas~\ref{lem:overhead}--\ref{lem:bypass}
\end{proof}

\begin{theorem}[soundness]\label{thm:soundness}
If machine starting from $\term_0$ computes $\term$\\ $(\ie, \econfb {\term_0} \nil {\rred {} {}}
\sconfb \nil {\term})$,
then $\term_0$ reduces in many steps to a normal form $\term$\\
$(\text{\ie}, \text{there } \text{exists } \term' \text{such that } \term_0 \;\rred R {\beta_\wnf}\; \term'$ and $\term' \; \nred R {\beta_\wnf}$ 
and $\term \converts {} \alpha \term')$.
\end{theorem}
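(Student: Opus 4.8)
The plan is to obtain the theorem almost directly from Proposition~\ref{prop:soundness}, the only genuine work being to match the machine's loaded and final configurations with the bare source terms $\term_0$ and $\term$. First I would evaluate the decoding at the two endpoints. Since $\sems\nil = \hole$ and $\semt\cdot$ is the identity on delimiter-free terms, the loaded configuration decodes as $\semk{\econfb{\term_0}{\nil}} = \semt{\term_0} = \term_0$; and by the shape invariant (Lemma~\ref{lem:shape}) a reachable final configuration $\sconfb{\nil}{\term}$ is of the form $\sconfb{S_3}{\nf}$ or $\sconfb{S_2}{\neu}$, so $\term$ is a full $\beta$-normal form---in particular it carries no substitution delimiters---whence $\semk{\sconfb{\nil}{\term}} = \semt{\term} = \term$. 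The loaded configuration is reachable (it is the image of $\term_0$ under the loading rule) and decodes to $\term_0$, which is closed (as assumed throughout this section), so Proposition~\ref{prop:soundness} applies to $\econfb{\term_0}{\nil} \rred{}{} \sconfb{\nil}{\term}$ and gives $\term_0 \rred R {\beta_\wnf} \converts{}{\alpha} \term$. Unfolding the composition yields a term $\term'$ with $\term_0 \rred R {\beta_\wnf} \term'$ and $\term' \converts{}{\alpha} \term$; by symmetry of conversion, $\term \converts{}{\alpha} \term'$, which is one of the two conclusions required.

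It then remains to check $\term' \nred R {\beta_\wnf}$. By Lemma~\ref{lem:shape}, $\term$ is a $\beta$-normal form, and $\beta$-normal forms are closed under $\alpha$-conversion (the grammar of $\nf$ is, since $\alpha$-contraction only renames a bound variable and preserves the outermost constructor), so $\term'$ is $\beta$-normal as well. Since $\red R {\beta_\wnf}$ is a restriction of $\red{}{\beta}$---its contraction $\contr{\beta_\wnf}$ is a sub-relation of $\contr{\beta}$ and it is plugged into ordinary contexts---a $\beta$-normal form admits no $\red R {\beta_\wnf}$-redex, hence $\term' \nred R {\beta_\wnf}$. Together with the previous paragraph this establishes the theorem.

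I do not expect a real obstacle in this last step: all the substantive content---that every transition other than~(\ref{tr:5}), (\ref{tr:9}) and~(\ref{tr:13}) preserves the decoding, that~(\ref{tr:5}) performs a $\beta_\wnf$-step, that~(\ref{tr:9}) performs an $\alpha$-step, and that the cache bypass~(\ref{tr:13}) is sound up to $\rred R {\beta_\wnf}$ and $\alpha$---was already carried out in Lemmas~\ref{lem:overhead}--\ref{lem:bypass} and packaged in Proposition~\ref{prop:soundness}. The only care needed here is bookkeeping: verifying the hypotheses of Proposition~\ref{prop:soundness} (reachability and closedness of the loaded configuration) and reading off from the shape invariant that the machine's output is a genuine delimiter-free $\beta$-normal form rather than an arbitrary machine term.
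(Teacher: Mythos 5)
Your proof is correct and follows essentially the same route as the paper's: apply Proposition~\ref{prop:soundness} to the run from the loaded to the terminal configuration, and use the shape invariant (Lemma~\ref{lem:shape}) to conclude that the output is a $\beta$-normal form, hence $\red R {\beta_\wnf}$-irreducible. The extra bookkeeping you supply (decoding the endpoint configurations, closure of normal forms under $\alpha$-conversion) is exactly the detail the paper leaves implicit.
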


\begin{proof}
  By Lemma~\ref{lem:shape} terminal configuration decodes to a term in
  normal form, so $\term$ is in normal form. By
  Proposition~\ref{prop:soundness} $\term_0$ reduces to $\term$.
\end{proof}


\section{Complexity Analysis and Completeness of the Machine}
\label{sec:complexity}
\subsection{The conversion problem}

One of the motivations for constructing an efficient machine for strong
CbV comes from the conversion problem, which asks
if two given terms are $\beta$-convertible. In general the problem is
undecidable, but it has important applications in proof assistants and
thus it is desirable to find efficient partial solutions.

If both input terms normalize in the strong CbV strategy, then
one straightforward solution is to normalize input terms by the
abstract machine and then to check if they are $\alpha$-equivalent.
Normalization yields a shared representation of terms, avoiding
possibly exponential size of the normal forms in explicit
representation.  As Condoluci \etal~state in \cite{CondoluciAC19}, the
$\alpha$-equivalence of shared terms can be checked in time linear
\wrt the size of shared representations.  Therefore convertibility
check can be done in time proportional to the time of normalization of
both terms.  In the following we analyse the cost of normalization.

In~\cite{BiernackaBCD20} it is shown that, using a technique called
\emph{streaming of terms}, the convertibility check can be short-circuited
if partial results of normalization differ. We do not consider it here
as fusion of term streaming and shared equality goes beyond the scope
of this paper.

\subsection{Execution length}
Probably the simplest approach to the complexity analysis of an
abstract machine is to find a (constant) upper bound on the number of
consecutive administrative steps of the machine. Then the overall
complexity of an execution is this bound times the number of
$\beta$-transitions times the cost of a~single transition.

In this section we estimate the global number of transitions executed
by the machine on a given term. Unfortunately, the simple approach
outlined above does not work here as the following example shows. Let
$c_n$ be the $n^\text{th}$ Church numeral,
$\dubleton := \tlam x {\tlam p {\tapp {\tapp {\tvar p} {\tvar x}}
    {\tvar x}}}$ and $I$ -- the identity.  The execution of
$\tapp{\tapp{c_n}{\dubleton}}I$ starts with two $\beta$-reductions
substituting $\dubleton$ and $I$.  Then next $n$ reductions result in
a value $r_n$, where $r_0 = \abstrb x x$ and
$r_{n+} = \abstrb p {\tapp {\tapp {\tvar p} {\delimit {\cache \ell
        {r_n}}}} {\delimit {\cache \ell {r_n}}}}$ for some $\ell$s.
Value $r_n$ decodes to a normal form, but it takes the machine
$\Omega(n)$ administrative steps to construct this normal form. Thus,
sequences of administrative transitions are not bounded by any
constant (see also Fig.~\ref{plt:explode4}).

To overcome the problem of long sequences of administrative steps we
perform a kind of amortized analysis of the execution
length. Following~\cite{Okasaki:99}, we define a potential function
$\psik$ of configurations. Then, for most of the transitions (more
precisely, for all but $\stackrel{(\ref{tr:7})}{\to}$) the cost of the
transition is covered by the change in the potential of the involved
configurations (cf. Lemma~\ref{lem:decrease}). Transition
$\stackrel{(\ref{tr:7})}{\to}$ is a preparatory step for a
$\beta$-reduction and its cost is covered by the involved
$\beta$-reduction (cf. Lemma~\ref{lem:increase}).

The potential of a configuration depends on the potentials of its
components: term, value, stack and the implicit heap. The potential
functions $\psit$, $\psiw$ and $\psis$ are defined as follows.

\begin{equation*}
\begin{aligned}[c]
\Psit{\tapp {\term_1} {\term_2} } &= 6 + \Psit{\term_1} + \Psit{\term_2}\\
\Psit{\tlam x \term}              &= 4 + \Psit{\term}\\
\Psit{\tvar x}                    &= 4\\
\Psit{\delimit \val}              &= 4\\
\Psiw{\tapp {\val_1} {\val_2} }   &= 3 + \Psiw{\val_1} + \Psiw{\val_2}\\
\Psiw{\abstrb x \term}            &= 3 + \Psit{\term}\\
\Psiw{\avar x}                    &= 1\\
\Psiw{\cache \ell \val     }      &= 3 \; \cancel{ + \;\Psiw{\val} }
\end{aligned}
\begin{aligned}[c]
\Psis{\nil}                      &= 0\\
\Psis{\cons {\flapp \term} S}    &= 5 + \Psis S + \Psit \term\\
\Psis{\cons {\frapp \val} S}     &= 4 + \Psis S + \Psiw \val\\
\Psis{\cons {\flapp \val} S}     &= 2 + \Psis S + \Psiw \val\\
\Psis{\cons {\frapp \term} S}    &= 1 + \Psis S\\
\hspace{1cm}
\Psis{\cons {\flam x} S}         &= 1 + \Psis S\\
\Psis{\cons {\fcache \ell} S}    &= 1 + \Psis S\\
~
\end{aligned}
\end{equation*}
Intuitively, these potential functions indicate for how many machine
steps a given construct is responsible. For example,
$\Psit{\tapp {\term_1} {\term_2} }$ says that if an application
$\tapp {\term_1} {\term_2}$ appears somewhere in a configuration, it
may generate 6 transitions of the machine plus the work generated by
the two subterms. The crucial observation is that when a normal form
of a value $\val$ is known and stored under location $\ell$, then
the normalization of $\val$ involves only a constant (more precisely, 2)
steps and it does not involve recomputation of $\val$ -- thus the
amount of work generated by $\cache \ell \val$ is bounded by $3$ (one
transition involves memoizing the normal form).

The potential function $\psih$ estimates the cost of maintaining the
heap (which is implicitly present in each configuration). It takes
into account all values that have their place on the heap (expressed
by the condition ${\cache \ell \val} \in K$ below, meaning that
${\cache \ell \val}$ appears somewhere in the current configuration),
but are not yet normalized (expressed by ${\lookup H \ell} = \none$)
and are currently not being evaluated (expressed by
$\fcache \ell \notin S$, meaning that $\fcache \ell$ does not appear
in the stack and thus the evaluation of $\val$ has not yet
started). It also takes into account the moment when the evaluation of
${\cache \ell \val}$ starts, \ie, when the current configuration is
of the form ${\mconfb \nil \ell S \val}$. Formally, $\psih$ is defined
as follows:

$$\Psih K = \sum_{(\ell, \val) \text{\;\;s.t.\;\;} K = {\mconfb \nil \ell S \val} \;\vee\; ({\cache \ell \val} \in K \;\wedge\; {\lookup H \ell} = \none \;\wedge\; \fcache \ell \notin S)} \Psiw \val$$

Now we define the potential function $\psik$ for configurations.  We use
Iverson bracket in the clause for $\cal C$-configuration ($[\varphi] = 1$
if $\varphi$ and $[\varphi] = 0$ otherwise) to denote advancement of
transition (\ref{tr:6}):
\begin{alignat*}{5}
\Psik{ \econfb \term S } &=& \Psit \term \;+&\; \Psis S + \Psih K\\
\Psik{ \cconfb S \val }    &=&\; \Psiw \val \;+&\; \Psis S + \Psih K - 9 \cdot [\cconfb S \val \stackrel{(\ref{tr:5})}{\to}]\\
\Psik{ \mconfb {\optional \term} \ell S \val }  &=& 2 \;+&\; \Psis S + \Psih K\\
\Psik{ \sconfb S \term}  &=&&\;  \Psis S + \Psih K
\end{alignat*}

\begin{lemma}
\label{lem:delimitann}
If substitution delimiter $\delimit \val$ occurs somewhere
in a reachable configuration
then it is of the form $\delimit {\cache \ell \val}$.
\end{lemma}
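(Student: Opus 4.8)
The plan is to prove the statement by induction on the number of transitions in the execution reaching a given configuration, strengthening it to the invariant that \emph{in every reachable configuration, every occurrence of a substitution delimiter has the form $\delimit{\cache \ell \val}$}. The base case is immediate: the initial configuration is $\econfb {\term_0} \nil$, and since input terms are plain $\lambda$-terms, $\term_0$ contains no delimiters at all.

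For the inductive step I would do a case analysis on the transition $K \to K'$ of Figure~\ref{sKNCV}, assuming the invariant for $K$. Two observations make every case routine. First, no transition ever looks inside an existing delimiter: delimiters already present in $K$ are only relocated---moved between the active term/value slot, the frames, and the stack---and, in transitions (\ref{tr:5}) and (\ref{tr:9}), copied along with the body of an abstraction by the substitution operation; but by the clause $\subst x \term {\delimit \val} = \delimit \val$ of Figure~\ref{fig:notationsSKNV}, substitution never descends into a delimiter, so every delimiter inherited from $K$ keeps verbatim the shape guaranteed by the induction hypothesis. Second, the only transitions that create a new delimiter are (\ref{tr:5}) and (\ref{tr:9}): (\ref{tr:5}) replaces the bound variable $x$ by $\delimit{\cache \ell \val}$, which is literally of the required form since the fired transition matched a top frame $\frapp{\cache \ell \val}$; and (\ref{tr:9}) replaces $x$ by $\delimit{\cache \ell {\avar{\fresh x}}}$, which is of the form $\delimit{\cache \ell \val}$ with $\val = \avar{\fresh x}$. (Transition (\ref{tr:3}) only eliminates delimiters, via $\mathsf{strip}$, so it cannot break the invariant either.)

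Putting the two observations together, every delimiter occurring in $K'$ is either inherited unchanged from $K$, hence of the form $\delimit{\cache \ell \val}$ by the induction hypothesis, or freshly introduced by (\ref{tr:5}) or (\ref{tr:9}), hence also of that form; this closes the induction. I do not anticipate a genuine obstacle here---the single point worth spelling out is that the copying of an abstraction body in transitions (\ref{tr:5}) and (\ref{tr:9}) never reaches into a delimiter, which is exactly the defining clause of substitution on the extended term grammar.
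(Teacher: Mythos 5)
Your proof is correct and follows the same idea as the paper's (one-line) argument: the only transitions that introduce substitution delimiters are (\ref{tr:5}) and (\ref{tr:9}), and both introduce them already annotated with a location. Your additional observations---that substitution does not descend into existing delimiters and that (\ref{tr:3}) only eliminates them---merely spell out why the invariant is preserved, so nothing further is needed.
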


\begin{proof}
The only transitions introducing substitution delimiters are
(\ref{tr:5}) and (\ref{tr:9})
which ensure location annotation.
\end{proof}

\begin{lemma}
\label{lem:substpot}
For any $\term$, $x$, $\val$ we have
$\Psit \term = \Psit {\subst x {\delimit \val} \term}$.
\end{lemma}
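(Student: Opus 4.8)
The plan is to prove the identity by structural induction on $\term$, relying on the single observation that a variable and a substitution delimiter carry the same potential, $\Psit{\tvar x} = 4 = \Psit{\delimit \val}$, and that this value does not depend on the contents $\val$ of the delimiter.

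First the base cases. If $\term = \tvar{x'}$ with $x' \ne x$, then $\subst x {\delimit \val}{\tvar{x'}} = \tvar{x'}$ and there is nothing to prove; if $x' = x$, then $\subst x {\delimit \val}{\tvar x} = \delimit \val$ and indeed $\Psit{\tvar x} = 4 = \Psit{\delimit \val}$. If $\term$ is itself a delimiter, say $\term = \delimit {\val'}$, then by the defining clause $\subst x {\delimit \val}{\delimit {\val'}} = \delimit {\val'}$, so the claim is immediate.

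For the inductive step, if $\term = \tapp{\term_1}{\term_2}$ then substitution commutes with application, so $\subst x {\delimit \val}{\tapp{\term_1}{\term_2}} = \tapp{\subst x {\delimit \val}{\term_1}}{\subst x {\delimit \val}{\term_2}}$; since $\Psit{\tapp{\term_1}{\term_2}} = 6 + \Psit{\term_1} + \Psit{\term_2}$, the equality follows from the two induction hypotheses. If $\term = \tlam{x'}{\term'}$ and $x' = x$, the substitution is the identity and there is nothing to prove; if $x' \ne x$, then $\subst x {\delimit \val}{\tlam{x'}{\term'}} = \tlam{x'}{\subst x {\delimit \val}{\term'}}$, and since $\Psit{\tlam{x'}{\term'}} = 4 + \Psit{\term'}$, the claim follows from the induction hypothesis on $\term'$. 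Note that $\psit$ depends only on the shape of the term, so the (capture-incurring, non-renaming) form of substitution used here is irrelevant.

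The argument is entirely routine, so there is no real obstacle; the only point worth flagging is that it hinges on $\Psit{\delimit \val}$ being the constant $4$, equivalently on $\Psiw{\cache \ell \val} = 3$ being independent of $\val$ — the design choice indicated by the cancelled summand $\cancel{+\,\Psiw{\val}}$ in the definition of $\psiw$. Were the potential of a delimiter to depend on its contents, substituting a large value would increase the potential and this lemma, which we will invoke to account for the $\beta$-transition~(\ref{tr:5}), would fail.
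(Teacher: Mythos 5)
Your proof is correct and follows the same idea as the paper's (which simply observes that the only constructors a substitution can replace are variables, and $\Psit{\tvar x} = \Psit{\delimit \val} = 4$); you have merely spelled out the routine structural induction that the paper leaves implicit.
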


\begin{proof}
The only constructors that can be replaced by a substitution
are variables and $\Psit x = \Psit {\delimit \val}$.
\end{proof}

\begin{lemma}[decrease]
\label{lem:decrease}
If $K$ is a reachable configuration and
$K \stackrel{\neq(\ref{tr:7})}{\to} K'$ then $ \Psik K > \Psik {K'}$.
\end{lemma}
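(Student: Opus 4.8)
The plan is to proceed by an exhaustive case analysis on which of the seventeen rules (\ref{tr:1})--(\ref{tr:6}), (\ref{tr:8})--(\ref{tr:18}) is taken in the step $K\to K'$, each time computing $\Psik K-\Psik{K'}$ directly from the definitions of $\psit,\psiw,\psis,\psih,\psik$ and checking that it is at least $1$. Lemma~\ref{lem:shape} together with the ``first matching rule'' discipline fixes which rule applies, and the Iverson condition $\cconfb S\val\stackrel{(\ref{tr:5})}{\to}$ appearing in $\psik$ is purely syntactic (the top frame is a right application carrying an annotated value, and the carried $\cal C$-value is a bare abstraction), so it is settled rule by rule. For the administrative rules (\ref{tr:1})--(\ref{tr:4}), (\ref{tr:8}), (\ref{tr:10}), (\ref{tr:11}), (\ref{tr:16})--(\ref{tr:18}), no location is allocated, no $\fcache\ell$ frame is pushed or popped, and the collection of subterms of the form $\cache\ell\val$ is untouched, so $\Psih K=\Psih{K'}$; the Iverson bracket, when present, evaluates to $0$ (in (\ref{tr:2}) it may jump from $0$ to $1$, which only enlarges the gap); and the additive constants of $\psit,\psiw,\psis$ are calibrated so that consuming a frame and rebuilding one syntax node frees exactly one unit --- e.g.\ $\Psis{\cons{\flapp\term}{S}}=5+\Psis S+\Psit\term$ against $\Psit{\tapp{\term_1}{\term_2}}=6+\Psit{\term_1}+\Psit{\term_2}$ for (\ref{tr:1}). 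Rule (\ref{tr:3}) on a delimiter uses Lemma~\ref{lem:delimitann} to know that $\strip{\delimit\val}$ has the form $\cache\ell{\val'}$, of value potential the constant $3$; and rules (\ref{tr:13}), (\ref{tr:15}) only shrink the stack --- the cell filled by (\ref{tr:15}) was not counted by $\psih$ anyway, since $\fcache\ell$ was on the stack --- so $\psih$ does not grow and the relevant stack constant pays for the step (in fact (\ref{tr:13}) frees at least $2$).

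The substitution and cache-allocating rules (\ref{tr:5}), (\ref{tr:6}), (\ref{tr:9}) are where the purpose of the $-9\cdot[\cconfb S\val\stackrel{(\ref{tr:5})}{\to}]$ summand of $\psik$ shows up. For (\ref{tr:5}) and (\ref{tr:9}) I would first invoke Lemma~\ref{lem:substpot}, so that substituting a delimited value leaves $\psit$ unchanged. In (\ref{tr:6}), $\cconfb{\cons{\frapp\val}{S}}{\abstrb x\term}\to\cconfb{\cons{\frapp{\cache\ell\val}}{S}}{\abstrb x\term}$ with $\ell$ fresh: the first-match discipline guarantees $\val$ was not already annotated, so annotating it changes the stack potential by $\Psiw{\cache\ell\val}-\Psiw\val=3-\Psiw\val$ and grows $\psih$ by the single new summand $\Psiw\val$; at the same time the target now matches $\stackrel{(\ref{tr:5})}{\to}$, hence carries $-9$, and the net change is $(3-\Psiw\val)+\Psiw\val-9=-6$, a strict decrease. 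In (\ref{tr:5}) the source already carries the $-9$; writing $\Psiw{\cache\ell\val}=3$ and noting the consumed $\frapp{\cache\ell\val}$ frame is worth $4+3$, the source potential reduces to $\Psit\term+\Psis S+\Psih K+1$, the target to $\Psit\term+\Psis S+\Psih{K'}$, and since a $\beta$-contraction can only delete occurrences of $\cache\ell\val$ from the configuration we have $\Psih K\ge\Psih{K'}$, hence a decrease of at least $1$. In (\ref{tr:9}) the freshly allocated location annotates an abstract variable, so $\psih$ grows by at most $\Psiw{\avar{\fresh x}}=1$, which is absorbed by the $2$-unit gap between $\Psiw{\abstrb x\term}$ and the cost of the pushed $\flam{\fresh x}$ frame inside the target $\cal E$-configuration.

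The memoization rules (\ref{tr:12}), (\ref{tr:14}) transfer potential between $\psih$ and the other components, and the first disjunct $K=\mconfb\nil\ell S\val$ in the definition of $\psih$ is precisely what keeps this balanced. Entering an $\cal M$-configuration by (\ref{tr:12}) on an already filled cell ($\lookup H\ell=\some\term$) leaves $(\ell,\val)$ uncounted by $\psih$ on both sides, and the constant $2$ in the $\cal M$-configuration potential pays for the step. When the cell is empty the target is $\mconfb\nil\ell S\val$, whose first disjunct keeps the summand $\Psiw\val$ alive in $\psih$; then (\ref{tr:14}) pushes $\fcache\ell$ and moves $\val$ into a $\cal C$-configuration, where the second disjunct of $\psih$ can no longer fire (the frame $\fcache\ell$ is now on the stack), so that very summand leaves $\psih$ and reappears as the $\Psiw\val$ term of the $\cal C$-configuration potential; the arithmetic then again yields a decrease of $1$, and (\ref{tr:15}) later restores the cell to the filled state at no extra cost to $\psih$.

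The one genuinely delicate point --- and the step I expect to be the main obstacle --- is rule (\ref{tr:12}) in the empty-cell case: for the first disjunct of $\psih$ to \emph{preserve}, rather than \emph{create}, the summand $\Psiw\val$, the source $\cconfb S{\cache\ell\val}$ must already have been counting it, i.e.\ we need $\fcache\ell\notin S$; were $\fcache\ell$ on the stack with $\lookup H\ell=\none$, the potential would rise by $\Psiw\val\ge1$ and the lemma would break. I would patch this with an auxiliary invariant, proved by induction over reachable runs in the spirit of Lemma~\ref{lem:shape}: whenever $\fcache\ell$ occurs on the stack and $\lookup H\ell=\none$, the location $\ell$ does not occur in the current value nor in any stack frame above that $\fcache\ell$ frame (in particular, each location gives rise to at most one $\fcache\ell$ frame, and (\ref{tr:14}) never fires with $\fcache\ell$ already on the stack). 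This holds because $\ell$ is allocated \emph{fresh} exactly at the step --- (\ref{tr:6}) or (\ref{tr:9}) --- that builds the annotation $\cache\ell\val$, and the cached value $\val$ is never modified afterwards, so $\ell$ can never occur inside $\val$; hence once (\ref{tr:14}) pushes $\fcache\ell$ and the machine begins normalising $\val$ on top of that frame, it cannot encounter $\cache\ell{\cdot}$ again --- and in particular cannot re-enter (\ref{tr:12}) on $\ell$ --- before (\ref{tr:15}) pops the frame and fills the cell. Given this invariant the bad sub-case of (\ref{tr:12}) is vacuous, the two $\psih$ summands cancel, and (\ref{tr:12}) too strictly decreases $\psik$, which completes the case analysis.
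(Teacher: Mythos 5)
Your proof is correct and follows essentially the same route as the paper's: a case analysis on every transition other than (\ref{tr:7}), using Lemmas~\ref{lem:delimitann} and~\ref{lem:substpot}, with the same bookkeeping of the Iverson bracket and of which pairs $(\ell,\val)$ enter and leave $\psih$ in rules (\ref{tr:6}), (\ref{tr:9}), (\ref{tr:12})--(\ref{tr:15}). Your auxiliary freshness invariant for the empty-cell subcase of (\ref{tr:12}) addresses a point the paper's proof leaves implicit (it simply writes $\Psih K$ on both sides of that case), and your argument that this subcase is vacuous is sound.
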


\begin{proof}[of Lemma \ref{lem:decrease}]
  The proof goes by case analysis on machine transitions.
  \begin{itemize}
  \item[(\ref{tr:1})] ~\\[-2.2em]
    \begin{equation*}
    \begin{split}
      \Psik{ \econfb {\tapp{\term_1}{\term_2}} {S}} &= 6
       + \Psik{\term_1} +\Psik{\term_2} + \Psis S + \Psih K\\
      &> 5
       + \Psik{\term_1} +\Psik{\term_2} + \Psis S + \Psih K\\
        & =\Psik{\econfb {\term_2} {\cons {\flapp{\term_1}} {S}}}
      \end{split}
    \end{equation*}
  \item[(\ref{tr:2})] ~\\[-2.2em]
    \begin{equation*}
    \begin{split}
      \Psik{\econfb {\tlam x \term} {S}} &= 4
       + \Psik{\term} + \Psis S + \Psih K\\
       &> 3
       + \Psik{\term} + \Psis S + \Psih K -9 \cdot [\cconfb S {\abstrb x \term} \stackrel{(\ref{tr:5})}{\to}]\\
      &= \Psik{\cconfb {S} {\abstrb x \term}}\\
      \end{split}
    \end{equation*}
 \item[(\ref{tr:3})] Here, by Lemma~\ref{lem:delimitann}, term $\term$ is of the form either
   $\tvar x$ or $\delimit {\cache \ell \val}$, so $\strip \term$ is either $\avar x$ or
   $\cache \ell \val$ and thus
    \begin{equation*}
      \begin{split}
        \Psik{\econfb {\term} {S}} &=  4+ \Psis S + \Psih K\\
       &> 3 + \Psis S + \Psih K\\
       &\geq \Psiw {\strip \term} + \Psis S + \Psih K - 9 \cdot [\cconfb S \val \stackrel{(\ref{tr:5})}{\to}]\\
       &= \Psik{\cconfb {S} {\strip \term}}
    \end{split}
  \end{equation*}
 \item[(\ref{tr:4})] ~\\[-2.2em]
    \begin{equation*}
      \begin{split}
        \Psik{\cconfb {\cons {\flapp \term} {S}} \val} &=  \Psiw{\val}
        + 5+ \Psis S + \Psit \term + \Psih K - 0\\
&> \Psit \term + 4 + \Psis S + \Psiw \val + \Psih K\\
&= \Psik{\econfb \term {\cons{ \frapp \val} {S}}}
    \end{split}
  \end{equation*}
 \item[(\ref{tr:5})] ~\\[-2.2em]
    \begin{equation*}
    \begin{split}\Psik {\cconfb {\cons {\frapp {\cache \ell \val}\!\!}
          {\!S}} {\abstrb x \term}} &=
      \Psiw{\abstrb x \term} +\Psis{\cons {\frapp {\cache \ell \val}\!\!}
        {\!S}} + \Psih K -9 \\
      &= 3 + \Psit \term + 4 + \Psis S + 3 + \Psih K -9 \\
      &=  \Psit \term  +\Psis S  + \Psih K +1\\
      &\stackrel{\text{Lemma}~\ref{lem:substpot}}{>} \Psit {\subst x {\delimit {\cache \ell \val}} \term}  +\Psis S  + \Psih K\\
      &=\Psik{\econfb {\subst x {\delimit {\cache \ell \val}} \term} {S}}
    \end{split}
  \end{equation*}
 \item[(\ref{tr:6})] ~\\[-2.2em]
    \begin{equation*}
    \begin{split}\Psik{\cconfb {\cons {\frapp  {\cache {\phantom \ell}
              \val}\!\!} {\!S}} {\abstrb x \term}}
      &= \Psiw{\abstrb x \term} +\Psis{{\cons {\frapp  {\cache {\phantom \ell}
              \val}\!\!} {\!S}}} + \Psih K - 0\\
      &= \Psiw{\abstrb x \term} + 4 + \Psis{S} + \Psiw{\val} + \Psih K - 0\\
      &> \Psiw{\abstrb x \term} + 4 + \Psis S  + 3 + \Psiw{\val} + \Psih K - 9\\
      &= \Psiw{\abstrb x \term} + 4 + \Psis S  + \Psiw{\cache \ell \val} + \Psih {K'} 
       - 9\\
      &=\Psik{\cconfb {\cons {\frapp {\cache \ell \val}} {S}} {\abstrb x \term}}
    \end{split}
  \end{equation*}
\item[(\ref{tr:8})] This is an easy case: $4$ occurring in $\Psis{\cons {\frapp \val_2} S}$ is greater than $3$ occurring in $\Psiw{\tapp {\val_1} {\val_2} }$.
 \item[(\ref{tr:9})] ~\\[-2.2em]
    \begin{equation*}
    \begin{split}\Psik {\cconfb S {\abstrb x \term}} &=
      \Psiw{\abstrb x \term} +\Psis S + \Psih K - 0 \\
      &= 3 + \Psit \term + \Psis S + \Psih K \\
      &> \Psit \term + 1 + \Psis S + \Psiw {\avar {\fresh x}} + \Psih K\\
      &\stackrel{\text{Lemma}~\ref{lem:substpot}}{=}
       \Psit {\subst x {\delimit {\cache \ell {\avar {\fresh x}}}} \term}  + \Psis {\cons {\flam {\fresh x}} {S}}  + \Psih {K'}\\
      &=\Psik{\econfb {\subst x {\delimit {\cache \ell {\avar {\fresh x}}}} \term} {\cons {\flam {\fresh x}} {S}}}
    \end{split}
  \end{equation*}
 \item[(\ref{tr:10})] This is an easy case.
 \item[(\ref{tr:11})] This is an easy case.
 \item[(\ref{tr:12})] ~\\[-2.2em]
    \begin{equation*}
    \begin{split}\Psik{\cconfb {S} {\cache \ell \val}} &=
    3 + \Psis S + \Psih K\\
    &> 2 + \Psis S + \Psih K\\
    &= \Psik{ \mconfb {\lookup H \ell} \ell {S} \val}
    \end{split}
  \end{equation*}
 \item[(\ref{tr:13})] ~\\[-2.2em]
    \begin{equation*}
    \begin{split}\Psik{\mconfb {\some \term} \ell {S} \val} &=
    2 + \Psis S + \Psih K\\
    &> \Psis S + \Psih K\\
    &= \Psik{\sconfb {S} \term}
    \end{split}
  \end{equation*}
 \item[(\ref{tr:14})] ~\\[-2.2em]
    \begin{equation*}
    \begin{split}\Psik{\mconfb \none \ell {S} \val} &=
    2 + \Psis S + \Psih K\\
    &> 1 + \Psis S + \Psih K\\
    &= \Psiw \val + 1 + \Psis S + \Psih {K'}\\
    &= \Psik{\cconfb {\cons {\fcache \ell} {S}} \val}
  \end{split}
  \end{equation*}
 \item[(\ref{tr:15})] The pair $(\ell, \val)$ is not counted before the transition 
    because ${\fcache \ell}$ is on the stack, and it is not counted after the transition
    because $\lookup H \ell \neq \none$.
    \begin{equation*}
    \begin{split}\Psik{\sconfb {\cons {\fcache \ell} {S}} \term} &=
    1 + \Psis S + \Psih K\\
    &> \Psis S + \Psih K\\
    &= \Psik{{\sconfb {S} \term}}
    \end{split}
  \end{equation*}
\item[(\ref{tr:16})] This is an easy case: $2$ occurring in
  $\Psis{\cons {\flapp \val} S} $ is greater than $1$ occurring in
  $\Psis{\cons {\frapp \term} S}$.  The value counted on the stack
  before the transition is counted in the configuration after the
  transition.
 \item[(\ref{tr:17})] This is an easy case.
 \item[(\ref{tr:18})] This is an easy case.
\end{itemize}
\end{proof}

\begin{lemma}[subterm]
\label{lem:subterm}
If $\abstrb x t$ is a reachable value of the machine
starting from term $t_0$
then $\Psiw {\abstrb x t} < \Psit {t_0}$.
\end{lemma}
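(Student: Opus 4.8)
The plan is first to reduce the statement to a single inequality about $\Psit{t_0}$ using the definitions of the potential functions, and then to prove that inequality by means of a structural invariant describing where abstraction values can originate. Since $\Psiw{\abstrb x t} = 3 + \Psit t$ while $\Psit{\tlam x t} = 4 + \Psit t$, it suffices to show that $\Psit{\tlam x t} \le \Psit{t_0}$ for every reachable abstraction value $\abstrb x t$; then $\Psiw{\abstrb x t} = \Psit{\tlam x t} - 1 \le \Psit{t_0} - 1 < \Psit{t_0}$. Note that the one‑unit gap between $\Psiw$ of an abstraction and $\Psit$ of the corresponding $\lambda$-term is exactly what makes the inequality strict, even in the extreme case where $t_0$ is itself an abstraction.

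Call a term $\term$ a \emph{$t_0$-descendant} if it can be obtained from a subterm of $t_0$ by finitely many applications of two operations: substituting a delimited value $\delimit\val$ for a free variable (for an \emph{arbitrary} value $\val$), and passing to a subterm. By Lemma~\ref{lem:substpot} the first operation leaves $\Psit$ unchanged, and the second cannot increase it since all summands in the definition of $\Psit$ are non‑negative; hence every $t_0$-descendant $\term$ satisfies $\Psit\term \le \Psit{t_0}$. The essential point is that the definition permits substituting \emph{any} value under a delimiter, so we never have to control the size of substituted values: the function $\Psit$ does not inspect the interior of a delimiter, and $\Psit{\tvar x} = \Psit{\delimit\val} = 4$.

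The heart of the proof is the following invariant, which I would show is preserved by every transition of the substitution‑based machine started from $\econfb{t_0}\nil$: in every reachable configuration, (i) the term component of an $\cal E$-configuration is a $t_0$-descendant; (ii) every term stored inside a frame of the shape $\flapp\term$ is a $t_0$-descendant; and (iii) for every abstraction value $\abstrb x t$ occurring anywhere in the configuration, $\tlam x t$ is a $t_0$-descendant. Checking preservation is a routine but fiddly case analysis: fresh structure arises only from transition~(\ref{tr:1}) (which passes to subterms), transition~(\ref{tr:2}) (which promotes an $\cal E$-configuration term $\tlam x t$ --- a $t_0$-descendant by (i) --- to the abstraction value $\abstrb x t$, establishing (iii)), and the $\beta$-step transitions~(\ref{tr:5}) and~(\ref{tr:9}) (which substitute a delimited value into the body $t$ of an abstraction value; by (iii) the term $\tlam x t$ is a $t_0$-descendant, hence so is $t$, hence so is the result, establishing (i)). For part (iii) one also uses that abstraction values sitting inside a value that is merely copied into a frame (transitions~(\ref{tr:4}),~(\ref{tr:11}),~(\ref{tr:16})) or placed under a delimiter (transitions~(\ref{tr:5}),~(\ref{tr:9})) already satisfied (iii) in the source configuration; and that transition~(\ref{tr:3}), via Lemma~\ref{lem:delimitann}, surfaces a cached value $\cache\ell\val$ out of a delimiter whose internal abstraction values were already covered. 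Computed normal forms held in $\cal S$-configurations and in the heap can be exponentially large and are \emph{not} $t_0$-descendants, but the invariant deliberately says nothing about them: such terms never become values, are never substituted, and never occur as abstraction‑value bodies, so they do not interfere.

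With the invariant established, the lemma follows at once: a reachable abstraction value $\abstrb x t$ satisfies $\Psit{\tlam x t} \le \Psit{t_0}$ by part (iii), hence $\Psiw{\abstrb x t} = 3 + \Psit t = \Psit{\tlam x t} - 1 < \Psit{t_0}$. I expect the main obstacle to be the bookkeeping for the mutual invariant --- in particular, being sure that every abstraction value ever encountered can be traced back, through copying, substitution and delimiter‑stripping, to a $\lambda$-subterm of $t_0$, and that the machine's computed normal forms are genuinely sealed off from this flow.
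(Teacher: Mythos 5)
Your proof is correct and follows essentially the same route as the paper: both reduce the claim to $\Psit{\tlam x t}\le\Psit{t_0}$ via the $4$-versus-$3$ gap, and both rest on the observation that abstraction bodies are subterms of $t_0$ modified only by substituting delimited values, which preserves $\Psit$ by Lemma~\ref{lem:substpot}. The only difference is presentational: the paper obtains the subterm property from the environment-based machine and transfers it through the translation, whereas you formulate the corresponding ``$t_0$-descendant'' invariant directly on the substitution-based machine.
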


\begin{proof}
Both machines never perform reductions
in bodies of abstractions which will be
invoked later.
In environment-based machine for all values $\abstr x t E$
terms $\tlam x t$ are always subterms of $t_0$.
By translation in substitution-based machine
these abstractions may be modified
only by replacing source variables
with values under substitution delimiters
and by Lemma~\ref{lem:substpot} it does not change the potential.
Function $\psit$ assigns 4 to abstraction constructor
which is greater than 3 assigned by $\psiw$.

\end{proof}

\begin{lemma}[increase]
\label{lem:increase}
  If $K$ is a reachable configuration from $\term_0$ and
  $K \stackrel{(\ref{tr:7})}{\to} K'$  then
  $ \Psik {K'} - \Psik K < \Psit {\term_0} $.
\end{lemma}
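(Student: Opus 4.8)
Here is a plan for proving Lemma~\ref{lem:increase}.

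The plan is to unfold transition~(\ref{tr:7}) explicitly and compare the two configuration potentials term by term. Written out, the transition has the shape
$$K \;=\; \cconfb {\cons {\frapp \val} S} {\cache \ell {\abstrb x \term}} \;\stackrel{(\ref{tr:7})}{\to}\; \cconfb {\cons {\frapp \val} S} {\abstrb x \term} \;=\; K',$$
so it merely strips the location annotation off an abstraction value that is about to be applied, leaving the stack and the (implicit) heap untouched. First I would record that, since the carried value of $K$ is an \emph{annotated} abstraction, transition~(\ref{tr:5}) is \emph{not} enabled at $K$ and the Iverson correction in the $\cal C$-clause of $\psik$ vanishes; using the cancelled clause $\Psiw{\cache \ell {\abstrb x \term}} = 3$ this gives $\Psik K = 3 + \Psis{\cons {\frapp \val} S} + \Psih K$. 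On the other side the carried value of $K'$ is the \emph{non}-annotated abstraction $\abstrb x \term$, so $\Psiw{\abstrb x \term} = 3 + \Psit \term$ and $\Psik {K'} = (3 + \Psit \term) + \Psis{\cons {\frapp \val} S} + \Psih {K'} - 9\cdot[\cconfb {\cons {\frapp \val} S} {\abstrb x \term} \stackrel{(\ref{tr:5})}{\to}]$.

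Next I would estimate the difference. The Iverson term in $\Psik{K'}$ is non-negative — it equals $1$ exactly when $\val$ is annotated, so that a (\ref{tr:5})-step follows, and $0$ otherwise — hence it may be discarded. For the heap part, the transition changes neither the heap nor the stack, neither side has the form $\mconfb \none \cdot \cdot \cdot$, and the only structural change is the removal of one outermost $\cache \ell \cdot$ annotation; therefore the set of occurrences $\cache {\ell'}{\val'}$ in the configuration can only shrink while every condition in the definition of $\psih$ is evaluated identically at $K$ and $K'$, so $\Psih {K'} \le \Psih K$. The stack potentials cancel, which leaves $\Psik {K'} - \Psik K \;\le\; \Psit \term \;=\; \Psiw{\abstrb x \term} - 3$.

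Finally I would invoke Lemma~\ref{lem:subterm}: $K'$ is reachable from $\term_0$ and carries $\abstrb x \term$, so $\abstrb x \term$ is a reachable value and $\Psiw{\abstrb x \term} < \Psit {\term_0}$. Chaining the two estimates yields $\Psik {K'} - \Psik K \le \Psit \term < \Psit {\term_0} - 3 < \Psit {\term_0}$, which is exactly the claim.

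The step I expect to be the main obstacle is the bookkeeping of the two Iverson brackets: one has to be sure that transition~(\ref{tr:5}) is genuinely disabled at $K$ — because $K$'s carried value is annotated — so that no $-9$ is subtracted there, while it \emph{may} be enabled at $K'$ (precisely when $\val$ happens to be annotated), where the extra $-9$ only strengthens the inequality. A secondary point to pin down is the monotonicity $\Psih {K'} \le \Psih K$: one must check that erasing a single annotation can only delete pairs from the sum defining $\psih$ and never add any, and that this holds regardless of whatever cache-annotated values may reside inside $\val$, elsewhere in the stack, or on the heap.
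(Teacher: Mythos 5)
Your proposal is correct and follows essentially the same route as the paper's proof: unfold transition~(\ref{tr:7}), use $\Psiw{\cache \ell {\abstrb x \term}} = 3$ and the vanishing Iverson bracket at $K$, observe that the stack is unchanged and $\Psih{K'} \le \Psih{K}$, and close the gap with Lemma~\ref{lem:subterm}. The paper compresses this into a single chain of (in)equalities and silently absorbs the $-9\cdot[\cdot]$ term and the heap comparison, whereas you spell those two points out explicitly — a harmless (indeed helpful) elaboration, not a different argument.
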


\begin{proof}
~\\\noindent  
$\Psik{\cconfb {\cons {\frapp  
              \val} {\!S}} {\cache \ell {\abstrb x \term}}} + \Psit {\term_0}$\vspace{-2ex}
  \begin{align*}
\qquad      &= \Psiw{{\cache \ell {\abstrb x \term}}} +\Psis{{\cons {\frapp  \val} {\!S}}} + \Psih K - 0 + \Psit {\term_0}\\
      &= \Psit {\term_0} + 3 + \Psis{{\cons {\frapp
              \val} {\!S}}} + \Psih K - 0\\
      &\stackrel{\text{Lemma}~\ref{lem:subterm}}{>}
       \Psiw{\abstrb x \term} + \Psis{{\cons {\frapp  
              \val} {\!S}}} + \Psih {K'} - 9 \cdot [\cconfb {\cons {\frapp
              \val} {\!S}} {\abstrb x \term} \stackrel{(\ref{tr:5})}{\to}]\\
      &=\Psik{\cconfb {\cons {\frapp \val} {\!S}} {\abstrb x \term}}
    \end{align*}
\end{proof} 
Example changes of potential are depicted in Figures~\ref{plt:explode4} and \ref{plt:test1} using
Matplotlib~\cite{Hunter:2007}.

\begin{figure}[h]
\begin{center}
\includegraphics[scale=.75]{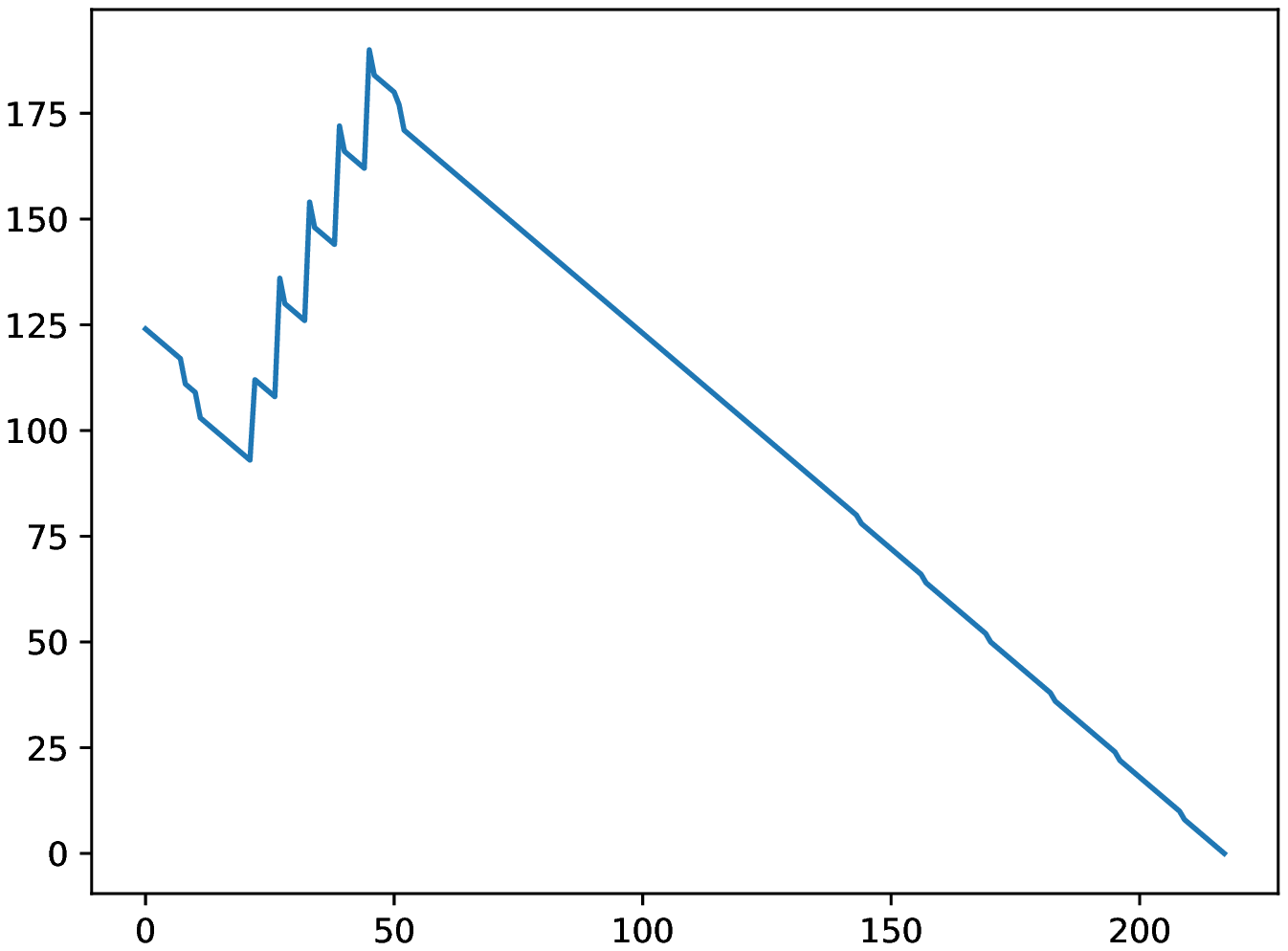}

\caption{Plot of potential for execution of
$\tapp{\tapp{c_6}{\dubleton}} I$
} performing 217 transitions of which 8 are $\beta$-transitions (\ref{tr:5}) \label{plt:explode4}
\end{center}
\end{figure}

\begin{figure}[h]
\begin{center}
\includegraphics[scale=.75]{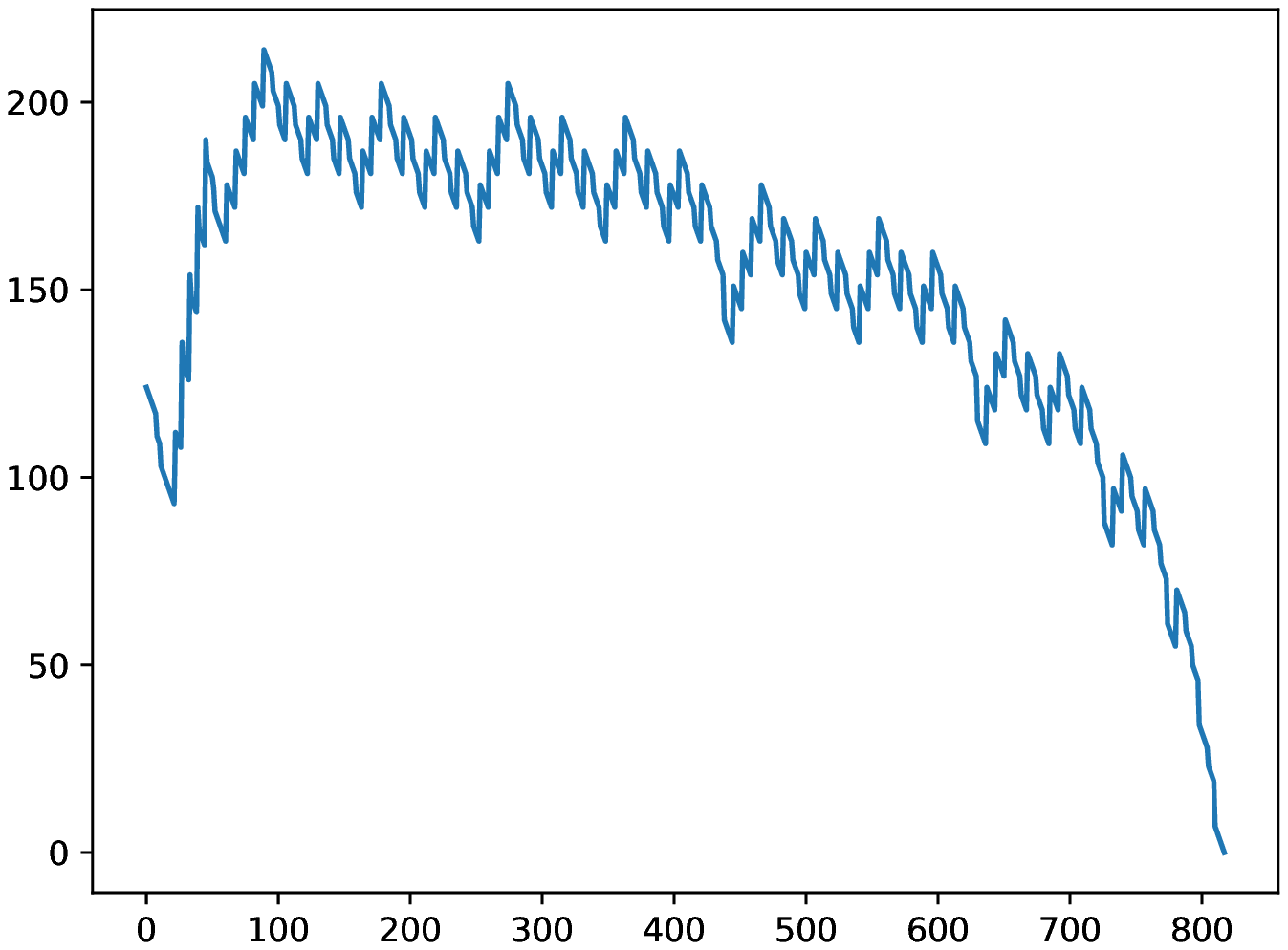}

\caption{Plot of potential for execution of
$\tapp{\tapp{c_6}{c_2}} I$
} performing 817 transitions of which 134 are $\beta$-transitions (\ref{tr:5}) \label{plt:test1}
\end{center}
\end{figure}

\begin{lemma}
\label{lem:tracebound}
  Let $\rho$ be a sequence of consecutive machine transitions starting from term $\term_0$, $|\rho|$ be number of steps in $\rho$ and  $|\rho|_{(\ref{tr:7})}$ be number of steps $(\ref{tr:7})$ in $\rho$. Then
  $|\rho| \leq (|\rho|_{(\ref{tr:7})} +1) \cdot \Psit {\term_0}$.
\end{lemma}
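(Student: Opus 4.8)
The plan is to turn Lemmas~\ref{lem:decrease} and~\ref{lem:increase} into an amortized (telescoping) argument, with $\psik$ playing the role of a bank account that is charged for every transition. Before that I would record two elementary facts about $\psik$. First, $\psik$ is integer-valued and non-negative: every constant occurring in $\psit$, $\psiw$, $\psis$, $\psih$ is a non-negative integer, $\psih$ is a sum of $\psiw$-values, and the one clause that subtracts $9$ (the $\cal C$-configuration clause, active only when the next step is $(\ref{tr:5})$) does so in a configuration whose $\psiw$-summands already exceed $9$; a routine inspection confirms $\Psik K \ge 0$ for all $K$. Since $\psik$ is integral, the \emph{strict} inequality of Lemma~\ref{lem:decrease} says that every non-$(\ref{tr:7})$ step decreases $\psik$ by at least $1$, and the \emph{strict} inequality of Lemma~\ref{lem:increase} says that every $(\ref{tr:7})$ step increases $\psik$ by at most $\Psit{\term_0}-1$. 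Second, the initial configuration has potential exactly $\Psit{\term_0}$: for a source term $\term_0$ the configuration $\econfb{\term_0}{\nil}$ carries the empty heap and contains no cache annotation, so $\Psih{\econfb{\term_0}{\nil}}=0$ and $\Psis{\nil}=0$, hence $\Psik{\econfb{\term_0}{\nil}}=\Psit{\term_0}$.

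Next I would set up the telescoping. Write the configurations visited along $\rho$ as $K_0\to K_1\to\cdots\to K_n$ with $K_0=\econfb{\term_0}{\nil}$, $n=|\rho|$ and $k=|\rho|_{(\ref{tr:7})}$; all the $K_i$ are reachable, so Lemmas~\ref{lem:decrease} and~\ref{lem:increase} apply at every step. Summing the per-step differences $\Psik{K_{i+1}}-\Psik{K_i}$ collapses to $\Psik{K_n}-\Psik{K_0}$. Exactly $k$ of the $n$ summands come from $(\ref{tr:7})$ steps and are each $\le\Psit{\term_0}-1$, while the remaining $n-k$ are each $\le-1$; therefore $\Psik{K_n}-\Psik{K_0}\le-(n-k)+k(\Psit{\term_0}-1)=k\,\Psit{\term_0}-n$. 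Rearranging and using $\Psik{K_n}\ge 0$ together with $\Psik{K_0}=\Psit{\term_0}$ gives $n\le\Psik{K_0}-\Psik{K_n}+k\,\Psit{\term_0}\le(k+1)\,\Psit{\term_0}$, which is the claim. (Here I count as transitions of $\rho$ the instances of the numbered rules of Figure~\ref{sKNCV}.)

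The computation itself is short, so the real content is in the two preliminary facts. The load-bearing one is the passage from a strict inequality to an integer gap of size $1$: this is exactly why $\psik$ is designed to be integer-valued and why Lemma~\ref{lem:increase} is stated with the sharp bound $\Psit{\term_0}$ rather than $\Psit{\term_0}$ plus a constant, since otherwise the estimate would only yield $(k+1)\Psit{\term_0}+O(k)$. The second delicate point is pinning down the initial potential as \emph{exactly} $\Psit{\term_0}$, because the trailing $+1$ in the factor $k+1$ is precisely this initial credit being spent at the end of the run.
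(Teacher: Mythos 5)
Your proof is correct and follows the same route as the paper, which simply declares the lemma ``an immediate consequence'' of Lemmas~\ref{lem:decrease} and~\ref{lem:increase}; you have filled in exactly the telescoping details (integrality of $\psik$, non-negativity of $\psik$, and $\Psik{\econfb{\term_0}{\nil}}=\Psit{\term_0}$) that the paper leaves implicit.
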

\begin{proof} This is an immediate consequence of  Lemmas \ref{lem:decrease} and \ref{lem:increase}.
\end{proof} 

\subsection{Completeness}

The upper bound on a length of machine trace from Lemma
\ref{lem:tracebound} leads to the completeness of the machine.

\begin{proposition}\label{prop:completeness}
If $K$ is a reachable configuration,
term $\semk K$ is closed
and\\ $\semk K \; \red R {\beta_\wnf} \; t'$,
then there exists $K'$ such that $K {\rred {} {}} K'$
and $t' \; \rred R {\beta_\wnf}\! \converts {} \alpha \; \semk {K'} $.
\end{proposition}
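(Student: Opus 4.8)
The plan is to exploit the determinism of the machine together with the execution‑length bound of Lemma~\ref{lem:tracebound}. Since $\semk K$ is not an rrCbV‑normal form (it admits a $\red R{\beta_\wnf}$‑step to $t'$, hence contains a redex in an $R$‑context) while, by Lemma~\ref{lem:shape}, every terminal configuration $\sconfb\nil\term$ decodes to a normal form, $K$ is not terminal; the machine is deterministic (each reachable configuration has at most one successor, by inspection of the transition rules), so from $K$ there is a unique maximal run $\rho\colon K=K_0\to K_1\to\cdots$. First I would show that $\rho$ must, within finitely many steps, either perform a transition $(\ref{tr:5})$ or reach a terminal configuration. Indeed, if $\rho$ contained no $(\ref{tr:5})$ transition it would contain no $(\ref{tr:7})$ transition either, because---as explained in Section~\ref{ssec:eam}---every $(\ref{tr:7})$ step is followed within two steps by a $(\ref{tr:5})$ step, and $(\ref{tr:7})$ never occurs as the last transition of a maximal run (its successor is a $\cal C$‑configuration, not $\sconfb\nil\term$). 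Applying Lemma~\ref{lem:tracebound} with $|\rho|_{(\ref{tr:7})}=0$ (to the run obtained by prepending to $\rho$ a run from an initial term witnessing reachability of $K$) then bounds $|\rho|$, so $\rho$ is finite, and a finite maximal run ends at a configuration with no successor, which by Lemma~\ref{lem:shape} is of the form $\sconfb\nil\term$.

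In both cases I would choose a witness $K'$ with $K\rred{}{}K'$: in the first case $K'$ is the target of the first $(\ref{tr:5})$ transition of $\rho$; in the second $K'$ is the terminal configuration reached. Proposition~\ref{prop:soundness} applies (since $\semk K$ is closed and closedness is preserved along $\rred R{\beta_\wnf}\converts{}\alpha$), giving $\semk K\rred R{\beta_\wnf}\converts{}\alpha\semk{K'}$, and I would observe that the $\beta_\wnf$‑reduction part is non‑empty: in the first case the selected $(\ref{tr:5})$ step already contributes a $\beta_\wnf$‑step by Lemma~\ref{lem:beta}; in the second case $\semk{K'}$ is a normal form whereas $\semk K$ is not, and $\alpha$‑conversion preserves normality, so the reduction cannot consist only of $\alpha$‑steps. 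Finally, since the rrCbV strategy $\red R{\beta_\wnf}$ is deterministic up to $\alpha$‑conversion~\cite{BiernackaBCD20} and $\semk K\red R{\beta_\wnf}t'$, the first $\beta_\wnf$‑step of $\semk K\rred R{\beta_\wnf}\converts{}\alpha\semk{K'}$ lands on a term $\converts{}\alpha t'$; combining this with the compatibility of $\red R{\beta_\wnf}$ with $\alpha$‑conversion yields $t'\rred R{\beta_\wnf}\converts{}\alpha\semk{K'}$, which is the claim.

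The case analysis and the invocation of soundness are routine; the delicate points, where I would spend most care, are (i) the bookkeeping with $\alpha$‑conversion when replacing the first step of the machine‑generated reduction by the intended reduct $t'$, which rests on $\red R{\beta_\wnf}$ being a well‑defined function on $\alpha$‑equivalence classes, and (ii) the progress argument that the machine is genuinely forced to produce a $\beta_\wnf$‑step rather than loop through administrative, $\alpha$‑ and bypass transitions---this is exactly where Lemma~\ref{lem:tracebound} is indispensable (which in turn rests on the strict decrease of the potential $\psik$ on every non‑$(\ref{tr:7})$ transition, Lemma~\ref{lem:decrease}, and on Lemma~\ref{lem:increase}), together with the observation that a $(\ref{tr:7})$ transition is never the final transition of a maximal run.
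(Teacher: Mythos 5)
Your proof is correct and follows essentially the same route as the paper's: both rest on Lemma~\ref{lem:shape} to show that a configuration decoding to a reducible term is non-terminal, on the potential argument (Lemmas~\ref{lem:decrease}--\ref{lem:tracebound}, plus the observation that transition~(\ref{tr:7}) is always followed by~(\ref{tr:5}) within two steps) to force progress towards a $\beta$-producing transition or termination, and on Proposition~\ref{prop:soundness} together with determinism of $\red R{\beta_\wnf}$ to identify the first reduct with $t'$ up to $\alpha$-conversion. The only (harmless) organizational difference is that you fold the paper's ``transition~(\ref{tr:13}) is eventually performed'' case into ``the maximal run terminates,'' extracting the needed $\beta_\wnf$-step from the fact that $\semk{K'}$ is normal while $\semk K$ is not.
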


\begin{proof}
  By Lemma~\ref{lem:shape} and inspection of all transitions,
  configurations that decode to a term not in normal form are
  non-terminal.  Since the decoding of the given configuration
  $\beta$-reduces, by Proposition~\ref{prop:soundness} the machine
  makes steps until $(\ref{tr:5})$ or $(\ref{tr:13})$ is performed.
  By Lemma~\ref{lem:decrease} the number of consecutive transitions
  $\neq\!\!(\ref{tr:7})$ is bounded by the potential. Since transition
  $(\ref{tr:7})$ must be followed by $(\ref{tr:5})$ in one or two
  steps, $(\ref{tr:5})$ or $(\ref{tr:13})$ must be reached
  eventually. In both cases Proposition~\ref{prop:soundness}
  guarantees that, up to $\alpha$-conversion, the same term is reached
  or bypassed.
\end{proof}

\begin{theorem}[completeness]
  If $\term_0$ reduces in many steps to a normal form $\term$
  $(\text{\ie}, \term_0 \;\rred R {\beta_\wnf}\; \term$ and
  $\term \; \nred R {\beta_\wnf})$, then machine starting from
  $\term_0$ computes $\term$
  $(\ie, \text{there }\\ \text{exists } \term' \text{such that }
  \econfb {\term_0} \nil {\rred {} {}} \sconfb \nil {\term'}$ and
  $\term \converts {} \alpha \term')$. $\phantom{\rred R{}}$
\end{theorem}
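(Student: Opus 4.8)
The plan is to run the machine from $\econfb {\term_0} \nil$ and show it terminates in a configuration $\sconfb \nil {\term'}$ with $\term' \converts {} \alpha \term$. As in the soundness argument we may assume $\term_0$ closed, since an open term can be wrapped in abstractions without affecting normalization; the decoding of $\econfb {\term_0} \nil$ is then the closed term $\term_0$. Since $\red R {\beta_\wnf}$ is deterministic (up to $\alpha$-conversion), the hypothesis pins down, up to $\alpha$, a unique reduction $\term_0 = s_0 \red R {\beta_\wnf} s_1 \red R {\beta_\wnf} \cdots \red R {\beta_\wnf} s_k = \term$ which, $s_k$ being normal, is maximal; write $k$ for its length.

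First I would establish finiteness of the run. Let $\rho$ be the (deterministic) run from $\econfb {\term_0} \nil$. By Lemma~\ref{lem:overhead}, every transition other than $(\ref{tr:5})$, $(\ref{tr:9})$, $(\ref{tr:13})$ leaves the decoding unchanged; $(\ref{tr:9})$ changes it only up to $\alpha$; $(\ref{tr:5})$ performs exactly one $\red R {\beta_\wnf}$ step; and $(\ref{tr:13})$ performs a possibly empty $\rred R {\beta_\wnf}$ followed by an $\alpha$-conversion. All these operations preserve closedness, so the cited lemmas apply all along $\rho$. Using determinism of rrCbV and stability of $\red R {\beta_\wnf}$ under $\alpha$, an easy induction shows that the decoding visited is always $\alpha$-equivalent to some $s_i$ for an index $i$ that never decreases, is bounded by $k$, and strictly increases at every $(\ref{tr:5})$ transition (a $(\ref{tr:5})$ transition from a configuration whose decoding is $\alpha$-equivalent to $s_k$ is impossible, as $s_k$ has no redex). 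Hence $\rho$ contains at most $k$ transitions $(\ref{tr:5})$; since, as noted in the proof of Proposition~\ref{prop:completeness}, each $(\ref{tr:7})$ transition is followed one or two steps later by a $(\ref{tr:5})$ transition, $\rho$ also contains at most $k$ transitions $(\ref{tr:7})$. Lemma~\ref{lem:tracebound} then bounds $|\rho|$ by $(k+1)\cdot \Psit {\term_0}$, so $\rho$ is finite.

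Next I would identify the terminal configuration and conclude. The run ends in a reachable configuration from which no transition applies; by Lemma~\ref{lem:shape} it is well-formed, and a case analysis on well-formed configurations (exactly the one behind Proposition~\ref{prop:completeness}, where it is observed that a configuration decoding to a non-normal term is never terminal) shows that every well-formed $\mathcal{E}$-, $\mathcal{C}$- and $\mathcal{M}$-configuration, as well as every $\mathcal{S}$-configuration with non-empty stack, enables some transition. Therefore the terminal configuration is of the form $\sconfb \nil {\term'}$, that is, $\econfb {\term_0} \nil {\rred {} {}} \sconfb \nil {\term'}$. Applying Theorem~\ref{thm:soundness} yields a term $\term''$ with $\term_0 \rred R {\beta_\wnf} \term''$, $\term'' \nred R {\beta_\wnf}$ and $\term' \converts {} \alpha \term''$; since $\term$ and $\term''$ are both rrCbV normal forms of $\term_0$ and $\red R {\beta_\wnf}$ is deterministic up to $\alpha$, we get $\term'' \converts {} \alpha \term$, hence $\term \converts {} \alpha \term'$, as required.

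The main obstacle is the finiteness argument: the crux is that the machine's decoding faithfully tracks the single deterministic rrCbV reduction sequence, so the machine can neither loop nor overshoot the normal form, and this is precisely what caps the number of $\beta$-transitions $(\ref{tr:5})$ — and, through the $(\ref{tr:7})\Rightarrow(\ref{tr:5})$ pattern, the number of preparatory transitions $(\ref{tr:7})$ — at $k$; only with this cap does the amortized estimate of Lemma~\ref{lem:tracebound} turn into an actual termination guarantee. The classification of stuck configurations is routine but must be carried out carefully so that no spuriously stuck well-formed configuration is overlooked.
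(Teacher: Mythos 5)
Your proof is correct and follows essentially the same route as the paper: termination via Lemma~\ref{lem:tracebound}, classification of terminal configurations via Lemma~\ref{lem:shape}, and identification of the result via the soundness/completeness propositions. The one thing you add is worth keeping: you make explicit the cap of $k$ on the number of $(\ref{tr:5})$-transitions (and hence, via the $(\ref{tr:7})\Rightarrow(\ref{tr:5})$ pattern, on $(\ref{tr:7})$-transitions) that is needed before Lemma~\ref{lem:tracebound} actually yields a finite bound — a step the paper's one-line appeal to that lemma leaves implicit.
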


\begin{proof}
  By Lemma \ref{lem:tracebound} the machine reaches a terminal
  configuration in finite number of steps. By Lemma~\ref{lem:shape}
  the terminal configuration decodes to a term in normal form and by
  Proposition~\ref{prop:completeness} this normal form is
  $\alpha$-equivalent to $\term$.
\end{proof}

\subsection{Implementation of environments}
\label{ssec:transition-costs}

The environment-based machine performs insert (in rules (\ref{tre:5})
and (\ref{tre:9})) and lookup (in rule (\ref{tre:3})) operations on
environments. When environments are implemented as lists, then the
cost of lookup is proportional to the size of the list.

Every environment is paired with a subterm of the initial term
constituting a closure.  The machine maintains an invariant that the
size of any environment is equal to the number of lambda abstractions
under which the paired subterm is located in the original term.

Accattoli and Barras in \cite{Accattoli-Barras:PPDP17} outline two
realizations of local environments which improve the asymptotic cost of
operations. One of them uses balanced trees, which are a natural choice
for named representation of lambda expressions. The other uses
random-access lists -- this requires precomputing de Bruijn indices of
variables in the original term.  Both these data structures are
persistent, both improve the cost of lookup to logarithmic in the size
of the initial term.

We also note that if identifiers are strings of symbols from a finite
alphabet (as binary numbers are strings of bits) then tries can be
applied as dictionaries making lookup and update costs proportional to
the length of the involved identifier.

\subsection{Overall complexity}
We treat arithmetic operations and operations on identifiers as realizable in
constant time.  In fact, they involve extra logarithmic cost but it
does not affect the polynomial complexity.  As described in Subsection
\ref{ssec:eam}, heap operations can be implemented within constant time.

Let $n$ be the number of $\beta$-reductions performed in a derivation from
term $\term_0$.  It can be simulated on both machines with
$O( (1+n) \cdot |\term_0|)$ transitions.

The environment-based machine has three transitions ((\ref{tre:3}),
(\ref{tre:5}) and (\ref{tre:9})) whose  cost is related to the environment;
all the other transitions have constant cost.  Therefore the overall cost of
the execution is $O( (1+n) \cdot |\term_0| \cdot {E(|t_0|)})$ where
${E(|t_0|)}$ is the cost of an operation on an environment of size
$|t_0|$ and this cost can be considered as logarithmic
because
an environment maps variables occurring in the input term
to values
and number of such variables is bounded by $|t_0|$.

In the substitution-based machine transitions (\ref{tr:5}) and
(\ref{tr:9}) have cost proportional to $|t_0|$ while the cost of all
the other transitions is constant.  Therefore the overall cost of the
execution is $O( (1+n) \cdot |\term_0|^2)$.  However, in the weak case,
\ie, before going under $\lambda$ with transition (\ref{tr:9}), the
machine performs $O( (1+n) \cdot |\term_0|)$ transitions with unitary
cost and $O(n)$ transitions (\ref{tr:5}) of cost $O(|\term_0|)$.
Therefore the cost of the weak part of the execution is
\textit{bilinear}, \ie, $O( (1+n) \cdot |\term_0|)$.

Accattoli and Dal Lago, in \cite{AccattoliL12,DalLagoA/abs-1711-10078},
present a polynomial simulation of Turing Machines relevant for any strategy reducing weak redexes first
which is also true for rrCbV.
We have shown a polynomial simulation of
rrCbV strategy.  Therefore reasonable
machines and the rrCbV strategy can simulate
each other with a polynomial overhead making the latter also a
reasonable machine for time.

Call-by-value strategies as described in the preliminaries perform the
same number of $\beta$-reductions to achieve normal form so this
result generalizes to all of them:

\begin{theorem}
  The number of steps performed by a strong call-by-value strategy is a
  reasonable measure for time.
\end{theorem}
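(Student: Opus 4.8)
The plan is to establish the invariance-thesis statement of Slot and van Emde Boas~\cite{DBLP:conf/stoc/SlotB84} for rrCbV first --- that Turing machines and rrCbV simulate each other with polynomially bounded time overhead, measuring the cost of an rrCbV computation by its number of $\beta$-steps --- and then to transfer this to an arbitrary strong call-by-value strategy using the fact, recalled in Section~\ref{sec:cbv}, that all such strategies perform the same number of contractions. For the direction simulating Turing machines by rrCbV I would appeal to the encoding of Accattoli and Dal Lago~\cite{AccattoliL12,DalLagoA/abs-1711-10078}: a run of $T$ steps of a Turing machine is encoded by a term of size polynomial in the machine and its input, which reaches its result in a number of weak $\beta$-steps polynomial in $T$; since their argument only needs the strategy to reduce weak redexes first, and rrCbV does so (its $F$-contexts are exactly the weak CbV contexts), this applies to rrCbV.

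For the converse, simulating rrCbV by a Turing machine, I would invoke the machine built in this paper. By Theorem~\ref{thm:soundness} and the completeness theorem it faithfully implements rrCbV; by Lemma~\ref{lem:tracebound} a derivation performing $n$ contractions is simulated in $O((1+n)\cdot|\term_0|)$ transitions; and by the ``Overall complexity'' analysis each transition of the substitution-based machine runs in time $O(|\term_0|)$ once heap and arithmetic operations are charged their (at most logarithmic, hence harmless) cost, so the whole normalization takes time polynomial in $n$ and $|\term_0|$ on a RAM and therefore, with at most polynomial overhead, on a Turing machine --- the output being kept in the shared representation discussed after the transition rules. Combining the two directions yields that the number of $\beta$-steps of rrCbV is a reasonable time measure. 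To remove the restriction to rrCbV, note that every strong CbV strategy considered here is obtained by iterating $\red \Weak {\beta_\wnf}$ under $\lambda$'s, and such iterations are strongly confluent; hence on every input each of them reaches the $\beta$-normal form (when it exists) in exactly the same number of steps as rrCbV, so the step-count cost measure is literally the same function for all of them and reasonability transfers verbatim.

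The main obstacle is not this final gluing but the ingredient that simulates strong CbV cheaply on a machine, i.e.\ essentially the whole development of the paper: the subtleties are that sequences of administrative steps are unbounded (handled by the potential $\psik$ and the amortized argument culminating in Lemma~\ref{lem:tracebound}) and that normal forms can be exponentially larger than $|\term_0|$ (handled by a term representation with sharing, so that the output is produced in polynomial time). A minor point to confirm is that leaving the output shared is legitimate here: we claim reasonability only for time, the normal form is a semantic artefact consumed downstream in shared form (e.g.\ in the $\alpha$-equivalence test of Condoluci \etal~\cite{CondoluciAC19}), and its unfolding is explicitly excluded.
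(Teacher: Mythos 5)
Your proposal matches the paper's own argument: both directions of the invariance thesis for rrCbV are obtained exactly as in the paper (Turing machines into rrCbV via Accattoli and Dal Lago's weak-redexes-first encoding, and rrCbV into Turing machines via the abstract machine together with Lemma~\ref{lem:tracebound} and the per-transition cost analysis), and the extension to all strong CbV strategies rests on the same observation that they all perform the same number of $\beta$-steps. The approach and its supporting ingredients are essentially identical to the paper's.
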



\section{Conclusion and future work}
\label{sec:conclusion}
We presented an abstract machine that realizes the strong CbV strategy
(in its right-to-left variant) and we proved its reasonability that
makes it a sufficiently good implementation model. The machine uses a
form of memoization to store computed normal forms and reuse them when
needed. A derivation from an evaluator using memothunks also in weak
normalization would lead to a machine performing some kind of strong
call-by-need strategy but its study and comparison with other
call-by-need machines are beyond the scope of this paper.


\bibliography{main}
\bibliographystyle{splncs04}


\end{document}